\documentclass[pdflatex]{sn-jnl}

\usepackage{graphicx}
\usepackage{multirow}
\usepackage{amsmath,amssymb,amsfonts}
\usepackage{amsthm}
\usepackage[title]{appendix}
\usepackage{xcolor}
\usepackage{textcomp}
\usepackage{manyfoot}
\usepackage{booktabs}
\usepackage[numbers,sort&compress]{natbib}
\newcommand{\doi}[1]{\url{https://doi.org/#1}}

\theoremstyle{thmstyleone}
\newtheorem{theorem}{Theorem}[section]
\newtheorem{proposition}[theorem]{Proposition}
\newtheorem{lemma}[theorem]{Lemma}
\newtheorem{corollary}[theorem]{Corollary}

\theoremstyle{thmstyletwo}

\newtheorem{remark}{Remark}

\theoremstyle{thmstylethree}
\newtheorem{definition}{Definition}[section]

\newcommand{\R}{\mathbb{R}}
\newcommand{\Z}{\mathbb{Z}}
\newcommand{\N}{\mathbb{N}}
\newcommand{\E}{\mathbb{E}}
\newcommand{\Prob}{\mathbb{P}}
\DeclareMathOperator{\Var}{Var}
\DeclareMathOperator{\Cov}{Cov}
\newcounter{maintablecount}

\begin{document}

\title[Weak Convergence for an Origin-Invariant CvM Statistic]{Weak Convergence and Gaussian Limits for a General-Dimensional Origin-Invariant Cram\'er--von Mises Statistic}

\author*[1]{\fnm{Marco} \sur{Mandap}\,}\email{marco.mandap@bulsu.edu.ph}

\affil*[1]{\orgdiv{Mathematics Department, College of Science}, \orgname{Bulacan State University}, \orgaddress{\city{City of Malolos}, \state{Bulacan}, \postcode{3000}, \country{Philippines}}}

\abstract{We introduce a general-$d$ origin-invariant Cram\'er--von Mises statistic $\bar\omega^2_d$ for testing complete spatial randomness (CSR), defined as the $2^d$-corner average of corner-oriented empirical-distribution-function discrepancies on $[0,1]^d$. The statistic admits a closed-form $O(n^2)$ computing formula. At $d=1$ it reduces to the classical rank-based CvM statistic, and at $d=2$ its computing formula coincides with Zimmerman's origin-invariant statistic. Under iid CSR, the associated empirical process converges in $\ell^\infty$ to a centered Gaussian process with an explicit cross-corner covariance kernel. The continuous mapping theorem yields a quadratic Gaussian limit governed by a positive trace-class covariance operator whose trace is $2^{-d}-3^{-d}$. A Poisson construction connects the conditional and unconditional formulations of CSR. For strictly stationary $\alpha$-mixing sequences with uniform marginals, we establish covariance summability, the long-run variance limit, and a finite-dimensional all-corners Gaussian limit. Monte Carlo experiments for $d=1,\dots,5$ illustrate null calibration, power against selected alternatives, and applications to three spatial datasets. The statistic is consistent against fixed iid alternatives whose distribution functions differ from uniformity on a set of positive measure; it is not claimed to be omnibus against every form of point-process interaction.}

\keywords{Cram\'er--von Mises statistic, complete spatial randomness, weak convergence, empirical processes, Gaussian process}

\pacs[MSC Classification]{62G10, 62G30, 60F17, 60G15}

\maketitle

\section{Introduction}\label{sec:intro}

The Cram\'er--von Mises (CvM) statistic \cite{Cramer1928,vonMises1931,Smirnov1937} is a classical goodness-of-fit criterion based on the integrated squared discrepancy between the empirical distribution function and the theoretical one. For testing that $n$ iid univariate samples $U_1,\dots,U_n$ are $\mathrm{Uniform}([0,1])$, the statistic
\[
  W^2 \;=\; n\int_0^1 \bigl(F_n(t) - t\bigr)^2\,dt
\]
where $F_n$ is the empirical CDF, admits the rank-based computing formula $W^2 = \frac{1}{12n} + \sum_{i=1}^n\bigl(\frac{2i-1}{2n} - U_{(i)}\bigr)^2$ and has the well-known asymptotic null distribution $\sum_{k=1}^\infty Z_k^2/(k^2\pi^2)$ with $\{Z_k\}$ iid $N(0,1)$ \cite{AndersonDarling1952,DurbinKnott1972}.

For spatial data---a mapped pattern of $n$ events in a rectangle $D\subset\R^2$---the natural multivariate generalization of the CvM statistic faces a subtlety absent from the univariate case: the empirical distribution function depends on which corner of $D$ is identified as the origin. Zimmerman \cite{Zimmerman1993} addressed this by defining an \emph{origin-invariant} bivariate CvM statistic $\bar\omega^2$, averaging the single-corner CvM functional over the four corners of $[0,1]^2$. Chiu and Liu \cite{ChiuLiu2009} subsequently proposed a generalized multivariate CvM statistic averaging over only the even-parity corners (the $2^{d-1}$ corners of $[0,1]^d$ with an even number of reflected coordinates), and studied its use in goodness-of-fit testing for multivariate distributions via Monte Carlo $p$-value approximation.

This paper constructs and analyzes a general-$d$ origin-invariant CvM test statistic $\bar\omega^2_d$, defined as the full $2^d$-corner orbit average. It establishes the statistic's computability, Gaussian-process limit, dependent-observation extension, and simulation-based null calibration. The contributions are:

\begin{enumerate}
\item \textbf{Definition and $O(n^2)$ computing formula.} We define $\bar\omega^2_d$ as the $2^d$-corner average of a corner-oriented CvM functional and prove a closed-form $O(n^2)$ computing formula valid for every corner and for the assembled orbit average (Theorems~\ref{thm:computing} and \ref{thm:omegaBar-computing}). The formula is directly implementable in statistical software for arbitrary $d$.
\item \textbf{Regression theorems.} We prove that $\bar\omega^2_d$ at $d=1$ reduces exactly to the classical rank-based CvM statistic (Theorem~\ref{thm:d1-regression}) and at $d=2$ is exactly Zimmerman's $\bar\omega^2$ (Theorem~\ref{thm:d2-regression}), despite the two constructions averaging over superficially different corner symmetry groups.
\item \textbf{Finite-dimensional Gaussian limit.} Under the null, the normalized empirical process at any finite selection of corners and evaluation points converges jointly in distribution to a mean-zero Gaussian with an explicit cross-corner covariance kernel (Theorem~\ref{thm:fidi-clt}). The existence of a limiting Gaussian process on the full index set is established via a Kolmogorov extension argument (Theorem~\ref{thm:limit-process}).
\item \textbf{Process-level weak convergence.} The corner-oriented indicator class is a finite union of bounded VC classes. Standard empirical-process theory therefore yields asymptotic equicontinuity and weak convergence in $\ell^\infty$ to a tight Gaussian limit (Theorems~\ref{thm:process-outer-wc} and~\ref{thm:equicontinuity}).
\item \textbf{Strictly stationary mixing extension.} For uniformly distributed, strictly stationary $\alpha$-mixing observations, we prove absolute summability of the all-corners covariance array, identify the positive semidefinite long-run covariance matrix, and obtain a finite-dimensional all-corners Gaussian limit under the corresponding scalar mixing-CLT conditions (Theorems~\ref{thm:mixing-cov-summable}--\ref{thm:mixing-joint-clt}).
\item \textbf{CSR$\leftrightarrow$Poisson bridge.} An ordered-list representation of a homogeneous Poisson point process on $[0,1]^d$ is constructed, and the law obtained by restricting it to a positive-probability sub-region is identified as a lower-rate Poisson process with the appropriate conditional location law (Theorems~\ref{thm:poisson-process} and \ref{thm:thinning}).
\item \textbf{Numerical assessment.} Monte Carlo null percentiles, size estimates, and selected power comparisons are reported for $d\le5$. They provide finite-sample evidence for the proposed calibration, but the comparisons remain specific to the stated simulation designs.
\end{enumerate}

Finite-dimensional convergence, asymptotic equicontinuity, total boundedness of the index semimetric, and continuity of the Gaussian limit yield full weak convergence by the standard empirical-process transfer theorem. Applying the continuous mapping theorem to the integrated square then gives the Karhunen--Lo\`eve null law described in Section~\ref{sec:null-law}.

\section{Preliminaries and notation}\label{sec:prelim}

\subsection{Corners and the corner-oriented EDF}\label{sec:corners}

For $d\ge 1$, a \emph{corner} of $[0,1]^d$ is a function $c : \{0,\dots,d-1\} \to \{\mathrm{true},\mathrm{false}\}$ specifying, for each coordinate $k$, whether the corner is the ``high'' ($c(k)=\mathrm{true}$) or ``low'' ($c(k)=\mathrm{false}$) endpoint. There are $2^d$ corners. The base corner $c_0$ is the all-false corner (the standard ``lower-left'' origin).

For a corner $c$ and a point $x\in[0,1]^d$, the \emph{corner-$c$ lower set} is
\[
  \mathrm{cornerLe}(c, x) \;=\; \{s\in[0,1]^d : \forall k,\; \text{if } c(k) \text{ then } s_k \ge x_k \text{ else } s_k \le x_k\}.
\]
The \emph{corner-$c$ empirical distribution function} (EDF) of a sample $X_1,\dots,X_n\in[0,1]^d$ is
\[
  F_n^c(x) \;=\; \frac{1}{n}\sum_{i=1}^n \mathbf{1}\{X_i \in \mathrm{cornerLe}(c,x)\},
\]
and the theoretical CDF under $\mathrm{Uniform}([0,1]^d)$ is
\[
  F^c(x) \;=\; \prod_{k=0}^{d-1} \bigl(\text{if } c(k) \text{ then } 1-x_k \text{ else } x_k\bigr).
\]

\subsection{The corner-oriented CvM functional}\label{sec:cvm-corner}

The \emph{corner-$c$ CvM functional} is
\begin{equation}\label{eq:cvmAtCorner}
  \omega^2_c \;=\; \int_{[0,1]^d} \bigl(F_n^c(x) - F^c(x)\bigr)^2\,dx.
\end{equation}
The \emph{general-$d$ origin-invariant statistic} is the $2^d$-corner orbit average:
\begin{equation}\label{eq:omegaBarSqD}
  \bar\omega^2_d \;=\; \frac{1}{2^d}\sum_{c\in\mathrm{Corner}(d)} \omega^2_c.
\end{equation}
This is the direct generalization of Zimmerman's $d=2$ construction \cite{Zimmerman1993} to arbitrary $d$. The averaging over all $2^d$ corners (not just the $2^{d-1}$ even-parity corners as in Chiu--Liu \cite{ChiuLiu2009}) ensures origin-invariance: the statistic does not depend on which corner of $[0,1]^d$ is identified as the origin.

\subsection{The algebraic relationship to Chiu--Liu's symmetrization}\label{sec:chiu-liu}

Chiu and Liu \cite{ChiuLiu2009} define the symmetric discrepancy $D_2^{(S)}$ as the average over only the even-parity corners ($2^{d-1}$ of the $2^d$). The full-orbit sum over all $2^d$ corners decomposes as the disjoint union of the even-parity sum and the odd-parity sum, and both halves have cardinality $2^{d-1}$ for $d\ge 1$ (proved via a coordinate-flip involution, not induction on $d$). Consequently, the full-orbit average is exactly the arithmetic mean of Chiu--Liu's even-parity average and the symmetric odd-parity average:
\[
  \bar\omega^2_d \;=\; \frac{1}{2}\bigl(D_2^{(S)} + D_2^{(S,\mathrm{odd})}\bigr).
\]

\begin{proposition}\label{prop:card-even}
For $d\ge 1$, the even-parity corners have cardinality $2^{d-1}$.
\end{proposition}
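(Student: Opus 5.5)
The plan is to use the coordinate-flip involution mentioned just before the statement, rather than induction on $d$. Identify a corner $c\in\mathrm{Corner}(d)$ with its set of reflected coordinates $R(c)=\{k : c(k)=\mathrm{true}\}\subseteq\{0,\dots,d-1\}$. This identification is a bijection, so $|\mathrm{Corner}(d)|=2^d$. Call $c$ even-parity if $|R(c)|$ is even and odd-parity otherwise. Every corner then falls into exactly one of the two classes, so the even and odd counts sum to $2^d$.

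Since $d\ge 1$, coordinate $0$ exists, and I will define the flip map $\phi:\mathrm{Corner}(d)\to\mathrm{Corner}(d)$ by $\phi(c)(0)=\neg c(0)$ and $\phi(c)(k)=c(k)$ for $k\neq 0$. The steps are as follows.
\begin{enumerate}
\item Check that $\phi\circ\phi=\mathrm{id}$. This holds because negation is an involution, and it makes $\phi$ a bijection.
\item Note that $R(\phi(c))=R(c)\,\triangle\,\{0\}$. Hence $|R(\phi(c))|=|R(c)|\pm 1$, so $\phi$ reverses parity.
\end{enumerate}
Together these show that $\phi$ restricts to a bijection from the even-parity corners onto the odd-parity corners, whose inverse is again $\phi$. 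The two classes therefore have equal size. Combined with the total count $2^d$, each class has cardinality $2^{d-1}$.

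There is no real obstacle here. The only point needing care is the hypothesis $d\ge 1$, which guarantees that a coordinate to flip exists. Indeed, for $d=0$ there is a single corner, and it is even, so the statement fails there. An alternative cross-check is the binomial identity $\sum_{j\text{ even}}\binom{d}{j}=\tfrac12\bigl((1+1)^d+(1-1)^d\bigr)=2^{d-1}$ for $d\ge1$. I would mention it only as a remark, since the paper explicitly frames the argument via the involution.
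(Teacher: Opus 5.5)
Your proposal is correct and follows the paper's own proof: flipping a single coordinate (you fix coordinate $0$, which exists since $d\ge1$) is a parity-reversing involution, so it gives a bijection between even- and odd-parity corners, and since the two classes partition the $2^d$ corners, each has $2^{d-1}$ elements. Your explicit check that $\phi\circ\phi=\mathrm{id}$ and that $R(\phi(c))=R(c)\,\triangle\,\{0\}$ simply spells out the details the paper leaves implicit.
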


\begin{proof}
Flipping any one coordinate toggles parity, giving an explicit bijection between even- and odd-parity corners. Since even and odd together partition all $2^d$ corners, $2\cdot|\text{even}| = 2^d$.
\end{proof}

\section{Main results}\label{sec:results}

\subsection{The \texorpdfstring{$O(n^2)$}{O(n-squared)} computing formula}\label{sec:computing}

The corner-oriented CvM functional \eqref{eq:cvmAtCorner} admits a closed-form finite-sample computing formula, obtained by a Fubini reduction to per-coordinate integrals and explicit antiderivative computations on $[0,1]$.

\begin{definition}\label{def:factors}
For $a,b\in[0,1]^d$ and corner $c$, define:
\begin{align}
  \mathrm{pairFactor}_c(a,b) &= \prod_{k=0}^{d-1}\bigl(1-\max(\mathrm{reflect}_c(a_k),\mathrm{reflect}_c(b_k))\bigr),\\
  \mathrm{crossFactor}_c(a) &= \prod_{k=0}^{d-1}\frac{1-\mathrm{reflect}_c(a_k)^2}{2},
\end{align}
where $\mathrm{reflect}_{c,k}(t) = \text{if } c(k) \text{ then } 1-t \text{ else } t$; the subscript $k$ is suppressed inside the products above.
\end{definition}

\begin{theorem}[Per-corner computing formula]\label{thm:computing}
For any corner $c$, any $n\ge 1$, and any sample $X_1,\dots,X_n\in[0,1]^d$,
\[
  \omega^2_c \;=\; \frac{1}{n^2}\sum_{i,j=1}^n \mathrm{pairFactor}_c(X_i,X_j) \;-\; \frac{2}{n}\sum_{i=1}^n \mathrm{crossFactor}_c(X_i) \;+\; \left(\frac{1}{3}\right)^d.
\]
\end{theorem}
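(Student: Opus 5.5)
We would expand the square in \eqref{eq:cvmAtCorner} and integrate term by term, using Fubini on the product domain $[0,1]^d$. Writing $\mathbf{1}_i(x)=\mathbf{1}\{X_i\in\mathrm{cornerLe}(c,x)\}$, we have
\[
  \omega^2_c \;=\; \frac{1}{n^2}\sum_{i,j=1}^n\int_{[0,1]^d}\mathbf{1}_i(x)\mathbf{1}_j(x)\,dx \;-\; \frac{2}{n}\sum_{i=1}^n\int_{[0,1]^d}\mathbf{1}_i(x)F^c(x)\,dx \;+\; \int_{[0,1]^d}F^c(x)^2\,dx,
\]
which is a finite sum of integrals of bounded measurable functions, so linearity of the integral is all we need. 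The key structural observation is that every integrand factorizes over coordinates. By the definition of $\mathrm{cornerLe}(c,x)$, $\mathbf{1}_i(x)=\prod_k \mathbf{1}\{\mathrm{reflect}_{c,k}(X_{i,k})\le \mathrm{reflect}_{c,k}(x_k)\}$, since for $c(k)$ true the condition $X_{i,k}\ge x_k$ is equivalent to $1-X_{i,k}\le 1-x_k$. Likewise $F^c(x)=\prod_k \mathrm{reflect}_{c,k}(x_k)$. Tonelli/Fubini then turns each $d$-dimensional integral into a product of one-dimensional integrals over $x_k\in[0,1]$.

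In each coordinate we substitute $t=\mathrm{reflect}_{c,k}(x_k)$, which is a measure-preserving bijection of $[0,1]$ (either the identity or $t\mapsto 1-t$). This reduces everything to the base corner. Write $a=\mathrm{reflect}_{c,k}(X_{i,k})$ and $b=\mathrm{reflect}_{c,k}(X_{j,k})$. The three elementary integrals are then:
\begin{itemize}
\item[(i)] $\int_0^1\mathbf{1}\{a\le t\}\mathbf{1}\{b\le t\}\,dt=1-\max(a,b)$;
\item[(ii)] $\int_0^1\mathbf{1}\{a\le t\}\,t\,dt=(1-a^2)/2$;
\item[(iii)] $\int_0^1 t^2\,dt=1/3$.
\end{itemize}
Taking products over $k$ gives exactly $\mathrm{pairFactor}_c(X_i,X_j)$, $\mathrm{crossFactor}_c(X_i)$ and $(1/3)^d$, respectively. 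Substituting these back into the expansion yields the stated formula. The diagonal terms $i=j$ need no separate treatment, since $\max(a,a)=a$ gives $1-a$ consistently.

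The main obstacle is bookkeeping rather than analysis. We must check carefully that the reflection convention in $\mathrm{cornerLe}$ matches $\mathrm{reflect}_c$, including the non-strict inequalities. Boundary ties occur on a Lebesgue-null set of $x$ and so do not affect the integrals. We must also confirm that the per-coordinate change of variables commutes with the product structure. We would handle this by first proving the coordinatewise factorization lemma for $\mathbf{1}_i$ and $F^c$ as a separate step, and then applying Fubini once.
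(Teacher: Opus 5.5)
Your proposal is correct and follows the paper's proof essentially step for step. Both arguments expand the square, factor each integrand over coordinates, apply Fubini, reduce reflected coordinates to the base corner via $t\mapsto 1-t$, and evaluate the three one-dimensional integrals $1-\max(a,b)$, $(1-a^2)/2$ and $1/3$. Your caveat about boundary ties is not needed, since $X_{i,k}\ge x_k \iff 1-X_{i,k}\le 1-x_k$ holds exactly.
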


\begin{proof}
The integral \eqref{eq:cvmAtCorner} expands as $\int(F_n^c)^2 - 2F_n^c F^c + (F^c)^2$. By Fubini (the integrand is a product of per-coordinate indicator functions over $[0,1]^d$), each term reduces to a product of one-dimensional integrals. The three underlying evaluations are:
\begin{align}
  \int_{[0,1]^d} (F^c(x))^2\,dx &= \left(\frac{1}{3}\right)^d, \label{eq:int-sq}\\
  \int_{[0,1]^d} \mathbf{1}\{a\in\mathrm{cornerLe}(c,x)\}\,F^c(x)\,dx &= \mathrm{crossFactor}_c(a), \label{eq:int-ind-F}\\
  \int_{[0,1]^d} \mathbf{1}\{a\in\mathrm{cornerLe}(c,x)\}\,\mathbf{1}\{b\in\mathrm{cornerLe}(c,x)\}\,dx &= \mathrm{pairFactor}_c(a,b). \label{eq:int-ind-ind}
\end{align}
For the base corner $c_0$, each per-coordinate integral is a direct antiderivative computation on $[0,1]$. For a reflected coordinate ($c(k)=\mathrm{true}$), the substitution $t\mapsto 1-t$ (since $[0,1]$ is symmetric under this map) reduces to the base-corner computation. Combining the three evaluations gives the stated formula.
\end{proof}

Assembling the per-corner formula over the $2^d$-corner orbit average gives the computing formula for $\bar\omega^2_d$ itself:

\begin{theorem}[Orbit-averaged computing formula]\label{thm:omegaBar-computing}
For any $n\ge 1$ and any sample $X_1,\dots,X_n\in[0,1]^d$,
\[
  \bar\omega^2_d \;=\; \frac{1}{n^2}\sum_{i,j=1}^n \mathrm{pairFactorAvg}(X_i,X_j) \;-\; \frac{2}{n}\sum_{i=1}^n \mathrm{crossFactorAvg}(X_i) \;+\; \left(\frac{1}{3}\right)^d,
\]
where $\mathrm{pairFactorAvg}(a,b) = \frac{1}{2^d}\sum_c \mathrm{pairFactor}_c(a,b)$ and $\mathrm{crossFactorAvg}(a) = \frac{1}{2^d}\sum_c \mathrm{crossFactor}_c(a)$.
\end{theorem}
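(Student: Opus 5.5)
The plan is to derive the orbit-averaged formula directly from Theorem~\ref{thm:computing} by averaging over corners and using linearity of finite sums; no new integral evaluation is needed. Fix $n\ge1$ and a sample $X_1,\dots,X_n\in[0,1]^d$. For each of the $2^d$ corners $c\in\mathrm{Corner}(d)$, Theorem~\ref{thm:computing} expresses $\omega^2_c$ as the sum of three terms: a double sum of $\mathrm{pairFactor}_c$, a single sum of $\mathrm{crossFactor}_c$, and the constant $(1/3)^d$.

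Substituting these expressions into the definition \eqref{eq:omegaBarSqD} gives
\[
\bar\omega^2_d=\frac{1}{2^d}\sum_{c}\Bigl[\frac{1}{n^2}\sum_{i,j}\mathrm{pairFactor}_c(X_i,X_j)-\frac{2}{n}\sum_i\mathrm{crossFactor}_c(X_i)+\Bigl(\frac13\Bigr)^d\Bigr].
\]
I then distribute the outer average over the three terms and interchange the finite sum over $c$ with the finite sums over $i$ and $j$. This interchange is legitimate because all sums are finite and the summands are real numbers; no Fubini or integrability argument is required at this stage.

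After the interchange, the first term becomes $\frac{1}{n^2}\sum_{i,j}\mathrm{pairFactorAvg}(X_i,X_j)$ and the second becomes $-\frac{2}{n}\sum_i\mathrm{crossFactorAvg}(X_i)$, by the definitions given in the statement. For the third term, the constant $(1/3)^d$ does not depend on $c$. Averaging it over the $2^d$ corners returns $(1/3)^d$, using only that $|\mathrm{Corner}(d)|=2^d$. That cardinality holds because corners are functions $\{0,\dots,d-1\}\to\{\mathrm{true},\mathrm{false}\}$, and it is consistent with Proposition~\ref{prop:card-even}.

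There is essentially no obstacle. The only point requiring care is that the normalizing constant $1/2^d$ in \eqref{eq:omegaBarSqD} matches the one in the definitions of $\mathrm{pairFactorAvg}$ and $\mathrm{crossFactorAvg}$, and that the number of corners is exactly $2^d$. This guarantees that the constant term passes through the average unchanged.
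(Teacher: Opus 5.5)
Your proposal is correct and follows the same route as the paper: substitute the per-corner formula of Theorem~\ref{thm:computing} into the orbit average, interchange the finite corner sum with the sample sums, and divide by $2^d$. Your explicit remark that the constant $(1/3)^d$ survives averaging because $|\mathrm{Corner}(d)|=2^d$ is the one detail the paper leaves implicit.
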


\begin{proof}
Unfold the orbit average \eqref{eq:omegaBarSqD}, substitute Theorem~\ref{thm:computing} at each corner, commute the corner-sum inward past the sample-index sums, and divide by $2^d$.
\end{proof}

\subsection{Regression to the classical \texorpdfstring{$d=1$}{d=1} CvM}\label{sec:d1-regression}

At $d=1$, the general-$d$ framework reduces to the classical rank-based CvM statistic. The double-sum/cross-term expression (unordered, no reference to order statistics) is shown to equal the rank-based sum via a combinatorial double-counting identity.

\begin{lemma}\label{lem:double-count}
For $a_0\le a_1\le\cdots\le a_{n-1}$,
\[
  \sum_{k=0}^{n-1}\sum_{l=0}^{n-1}\max(a_k,a_l) \;=\; \sum_{m=0}^{n-1}(2m+1)\,a_m.
\]
\end{lemma}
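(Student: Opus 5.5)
The plan is to double count: assign each ordered pair $(k,l)$ to the index achieving the maximum, then count how many ordered pairs are assigned to each index $m$. The key observation is that, because the sequence is sorted, $\max(a_k,a_l)=a_{\max(k,l)}$ for every pair of indices. If $k\le l$ then $a_k\le a_l$, so the maximum is $a_l=a_{\max(k,l)}$, and symmetrically when $l\le k$. Ties ($a_k=a_l$ with $k\ne l$) cause no ambiguity, since the claim concerns values and $a_{\max(k,l)}$ is a valid choice among equal values. This identity turns the value-dependent maximum into the index-dependent quantity $\max(k,l)$, and that replacement is the only place where monotonicity is used.

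Next, I would regroup the double sum by the value $m=\max(k,l)\in\{0,\dots,n-1\}$:
\[
\sum_{k=0}^{n-1}\sum_{l=0}^{n-1}\max(a_k,a_l)=\sum_{m=0}^{n-1}a_m\,\#\{(k,l)\in\{0,\dots,n-1\}^2:\max(k,l)=m\}.
\]
The counting step then gives $\#\{(k,l):\max(k,l)=m\}=(m+1)^2-m^2=2m+1$. This holds because the pairs with $\max(k,l)\le m$ form the square $\{0,\dots,m\}^2$ of size $(m+1)^2$, and removing those with $\max\le m-1$ leaves $2m+1$ pairs. Equivalently, these are the $L$-shaped set consisting of $(m,l)$ with $l\le m$ and $(k,m)$ with $k<m$, which contains $(m+1)+m$ pairs. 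Substituting this count into the regrouped sum yields $\sum_m(2m+1)a_m$, as claimed.

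The main obstacle is only bookkeeping: making the regrouping by $\max(k,l)$ rigorous as a partition of the index square into the disjoint fibers $\{\max(k,l)=m\}$, and handling ties correctly via the index identity above. As a fallback, I could argue by induction on $n$. Appending a new largest element $a_n$ adds the pairs $(n,l)$ and $(k,n)$ for $k,l\le n$, which number $2n+1$. Each of these has maximum $a_n$, so the new pairs contribute exactly $(2n+1)a_n$, matching the new term on the right-hand side.
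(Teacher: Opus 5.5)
Your proposal is correct and follows the paper's argument: both assign each ordered pair $(k,l)$ to the index $\max(k,l)$ and count the $2m+1$ pairs in the $L$-shaped fiber at $m$. Your explicit identity $\max(a_k,a_l)=a_{\max(k,l)}$ makes rigorous the treatment of ties, which the paper's one-line proof leaves implicit.
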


\begin{proof}
For each $m$, $a_m$ is the maximum of the pair $(k,l)$ whenever at least one index equals $m$ and the other is $\le m$. There are $2m+1$ such pairs.
\end{proof}

\begin{theorem}[$d=1$ regression]\label{thm:d1-regression}
For $n$ samples $Y_1,\dots,Y_n\in[0,1]$,
\[
  n\cdot\omega^2_{c_0} \;=\; \frac{1}{12n} + \sum_{i=0}^{n-1}\left(\frac{2i+1}{2n} - Y_{(i)}\right)^2,
\]
where $Y_{(0)}\le\cdots\le Y_{(n-1)}$ are the order statistics.
\end{theorem}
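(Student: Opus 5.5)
Set $d=1$ and $c=c_0$ in Theorem~\ref{thm:computing}. At the base corner $\mathrm{reflect}$ is the identity, so $\mathrm{pairFactor}_{c_0}(a,b)=1-\max(a,b)$ and $\mathrm{crossFactor}_{c_0}(a)=(1-a^2)/2$. This gives
\[
  \omega^2_{c_0} \;=\; \frac{1}{n^2}\sum_{i,j}\bigl(1-\max(Y_i,Y_j)\bigr) \;-\; \frac{1}{n}\sum_i (1-Y_i^2) \;+\; \frac13 .
\]
Both sums are symmetric in the sample, so we may relabel by order statistics $a_m=Y_{(m)}$. Lemma~\ref{lem:double-count} then turns the double sum of maxima into the linear sum $\sum_m (2m+1)Y_{(m)}$. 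Multiplying by $n$ yields the unordered expression
\[
  n\,\omega^2_{c_0} \;=\; n - \frac1n\sum_{m}(2m+1)Y_{(m)} - n + \sum_m Y_{(m)}^2 + \frac n3 .
\]
Simplifying, this is $\sum_m Y_{(m)}^2 - \frac1n\sum_m(2m+1)Y_{(m)} + \frac n3$.

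Next expand the right-hand side of the claim:
\[
  \frac{1}{12n}+\sum_{m=0}^{n-1}\Bigl(\frac{2m+1}{2n}\Bigr)^2 - \frac1n\sum_m (2m+1)Y_{(m)} + \sum_m Y_{(m)}^2 .
\]
The quadratic terms in $Y_{(m)}$ match. The linear terms match as well, since $2\cdot\frac{2m+1}{2n}=\frac{2m+1}{n}$. What remains is the constant identity
\[
  \frac{1}{12n}+\frac{1}{4n^2}\sum_{m=0}^{n-1}(2m+1)^2=\frac n3 .
\]
This follows from the odd-squares sum $\sum_{m=0}^{n-1}(2m+1)^2=n(4n^2-1)/3$. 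With it, the left side equals $\frac{1}{12n}+\frac{4n^2-1}{12n}=\frac{n}{3}$.

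The only point needing care is the passage from the unordered double sum to order statistics. The double sum is invariant under permuting indices, so sorting is harmless. Ties are also fine, because Lemma~\ref{lem:double-count} only assumes weak ordering. I would also check the sign conventions carefully, since the factor $2/n$ multiplies $(1-a^2)/2$. The remaining steps are routine algebra, and the odd-squares identity can be verified by induction on $n$ if a self-contained argument is wanted.
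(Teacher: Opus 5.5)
Your proposal is correct and follows essentially the same route as the paper. Both specialize Theorem~\ref{thm:computing} to $d=1$, $c=c_0$, convert the double sum of maxima with Lemma~\ref{lem:double-count} after sorting, and match the constant term via $\sum_{m=0}^{n-1}(2m+1)^2=(4n^3-n)/3$.
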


\begin{proof}
Theorem~\ref{thm:computing} at $d=1$, $c=c_0$ gives
\[
  n\omega^2_{c_0}
  =\sum_{i=1}^nY_i^2-\frac1n\sum_{i,j=1}^n\max(Y_i,Y_j)+\frac n3.
\]
By Lemma~\ref{lem:double-count}, the double sum equals
$\sum_{m=0}^{n-1}(2m+1)Y_{(m)}$. Expanding
$\sum_i\bigl((2i+1)/(2n)-Y_{(i)}\bigr)^2$ and using
$\sum_{m=0}^{n-1}(2m+1)^2=(4n^3-n)/3$ gives the displayed identity, including the constant $1/(12n)$.
\end{proof}

\begin{corollary}[Orbit-average regression at $d=1$]\label{cor:d1-orbit}
For the same sample,
\[
  n\bar\omega^2_1
  = \frac{1}{12n} + \sum_{i=0}^{n-1}\left(\frac{2i+1}{2n}-Y_{(i)}\right)^2.
\]
\end{corollary}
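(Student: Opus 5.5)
At $d=1$ there are exactly two corners: the base corner $c_0$ and the reflected corner $c_1$ with $c_1(0)=\mathrm{true}$. By \eqref{eq:omegaBarSqD}, $\bar\omega^2_1=\tfrac12(\omega^2_{c_0}+\omega^2_{c_1})$. Theorem~\ref{thm:d1-regression} already identifies $n\omega^2_{c_0}$ with the right-hand side. The plan is therefore to show that $\omega^2_{c_1}=\omega^2_{c_0}$ for every sample; the corollary then follows at once.

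For the reflected corner, apply Theorem~\ref{thm:computing} with $c=c_1$ and $d=1$. Since $\mathrm{reflect}_{c_1}(t)=1-t$, we have
\[
n\omega^2_{c_1}=\frac1n\sum_{i,j}\bigl(1-\max(1-Y_i,1-Y_j)\bigr)-\sum_i\bigl(1-(1-Y_i)^2\bigr)+\frac n3 .
\]
Using $1-\max(1-a,1-b)=\min(a,b)$ and $1-(1-t)^2=2t-t^2$, this becomes
\[
n\omega^2_{c_1}=\frac1n\sum_{i,j}\min(Y_i,Y_j)-2\sum_iY_i+\sum_iY_i^2+\frac n3 .
\]
Next apply $\min(a,b)=a+b-\max(a,b)$. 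This gives $\frac1n\sum_{i,j}\min(Y_i,Y_j)=2\sum_iY_i-\frac1n\sum_{i,j}\max(Y_i,Y_j)$. The linear terms $\pm2\sum_iY_i$ then cancel, and the expression becomes $\sum_iY_i^2-\frac1n\sum_{i,j}\max(Y_i,Y_j)+\frac n3$. This is exactly the expression for $n\omega^2_{c_0}$ given in the first line of the proof of Theorem~\ref{thm:d1-regression}.

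Averaging the two equal quantities gives $n\bar\omega^2_1=n\omega^2_{c_0}$, and Theorem~\ref{thm:d1-regression} finishes the proof.

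There is an alternative route, by substituting $t\mapsto1-t$ directly in \eqref{eq:cvmAtCorner}. This sends $F_n^{c_1}(x)-F^{c_1}(x)$ to $-(F_n(x^-)-x)$, with $F_n$ the base-corner empirical CDF evaluated at the left limit, and the squared integral is unchanged because left and right limits differ only on a finite set. The computing-formula route above avoids this atom bookkeeping, so I prefer it.

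I expect no real obstacle. The only point needing care is the $\min$/$\max$ identity and tracking the factor $n$ when multiplying Theorem~\ref{thm:computing} through.
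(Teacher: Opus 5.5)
Your proof is correct, but you obtain the key equality $\omega^2_{c_1}=\omega^2_{c_0}$ by a different mechanism from the paper's. The paper argues by transport. The substitution $x\mapsto 1-x$ turns the high-corner functional of $Y_1,\dots,Y_n$ into the base-corner functional of the reflected sample $1-Y_1,\dots,1-Y_n$, whose order statistics are $1-Y_{(n-1-i)}$. The paper then checks that the rank formula of Theorem~\ref{thm:d1-regression} is invariant under this reflection: the reindexing $m=n-1-i$ turns $2n-2i-1$ into $2m+1$. You never leave the unordered computing formula of Theorem~\ref{thm:computing}. The identity $\min(a,b)=a+b-\max(a,b)$ produces a term $2\sum_iY_i$ that cancels exactly against the change in the cross term. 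Your bookkeeping of the factor $n$ is right, and the result matches the first display in the proof of Theorem~\ref{thm:d1-regression}. Your route avoids order statistics and reindexing altogether, and it shows why $d=1$ is special. For $d\ge2$ the per-coordinate difference $\min(a_k,b_k)-(1-\max(a_k,b_k))=a_k+b_k-1$ is multiplied by a pair-dependent product over the other coordinates, so it no longer collapses to a single sum. This is consistent with individual corner functionals differing in general. The paper's route is more structural: $\omega^2_c(X)=\omega^2_{c_0}(\mathrm{reflect}_c X)$ holds in every dimension, and only the final invariance step is one-dimensional. One small correction to your aside: the substitution done the paper's way needs no left-limit bookkeeping, because $\{Y_i\ge 1-u\}=\{1-Y_i\le u\}$ exactly. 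The term $F_n(x^-)$ comes from complementation, $F_n^{c_1}(x)=1-F_n(x^-)$, which involves no substitution at all. That complementation argument, combined with the fact that left and right limits differ only at finitely many points, is itself a valid and even shorter proof.
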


\begin{proof}
The high-corner functional becomes the base-corner functional of the reflected sample $1-Y_1,\dots,1-Y_n$ under the substitution $x\mapsto1-x$. Its order statistics are $1-Y_{(n-1-i)}$, and substitution into the rank formula shows that reflection leaves the classical CvM statistic unchanged. Thus the two corner functionals are equal, so their average equals either one.
\end{proof}

\subsection{Regression to Zimmerman's \texorpdfstring{$d=2$}{d=2} statistic}\label{sec:d2-regression}

At $d=2$, the closed computing formula for $\bar\omega^2_d$ coincides with Zimmerman's origin-invariant bivariate statistic \cite{Zimmerman1993}. The comparison is made at the level of the explicit sample formula, avoiding dependence on how the four corner transformations are enumerated.

\begin{theorem}[$d=2$ regression]\label{thm:d2-regression}
For $n$ samples $(u_i,v_i)\in[0,1]^2$,
\[
  n\cdot\bar\omega^2_2 \;=\; \frac{1}{4n}\sum_{i,j=1}^n(1-|u_i-u_j|)(1-|v_i-v_j|) - \frac{1}{2}\sum_{i=1}^n(u_i^2-u_i-\tfrac{1}{2})(v_i^2-v_i-\tfrac{1}{2}) + \frac{n}{9},
\]
which is exactly Zimmerman's (1993, eq.~3) computing formula for $\bar\omega^2$.
\end{theorem}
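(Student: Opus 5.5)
The plan is to specialize Theorem~\ref{thm:omegaBar-computing} to $d=2$, multiply through by $n$, and evaluate the two averaged factors $\mathrm{pairFactorAvg}$ and $\mathrm{crossFactorAvg}$ in closed form. The structural fact that makes this tractable is that a corner is a free choice of $c(k)\in\{\mathrm{true},\mathrm{false}\}$ for each coordinate independently. Both $\mathrm{pairFactor}_c$ and $\mathrm{crossFactor}_c$ are products over $k$ of a factor depending only on $c(k)$. Hence the sum over all $2^d$ corners of such a product equals the product over $k$ of the two-term sums:
\[
  \frac{1}{2^d}\sum_c \prod_{k} g_{c(k)}(a_k)=\prod_k \frac{g_{\mathrm{false}}(a_k)+g_{\mathrm{true}}(a_k)}{2}.
\]
This reduces the whole computation to one-dimensional averages.

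\textbf{Pair term.} Per coordinate, the two-term sum is
\[
  \bigl(1-\max(a,b)\bigr)+\bigl(1-\max(1-a,1-b)\bigr)=1-\max(a,b)+\min(a,b)=1-|a-b|.
\]
The averaged one-dimensional factor is therefore $\tfrac12(1-|a-b|)$, so
\[
  \mathrm{pairFactorAvg}\bigl((u_i,v_i),(u_j,v_j)\bigr)=\tfrac14(1-|u_i-u_j|)(1-|v_i-v_j|).
\]
Multiplying by $n\cdot n^{-2}$ gives the coefficient $\frac{1}{4n}$ in front of the double sum.

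\textbf{Cross term.} Per coordinate, the two-term sum is
\[
  \tfrac12(1-a^2)+\tfrac12\bigl(1-(1-a)^2\bigr)=\tfrac12(1+2a-2a^2).
\]
The average is therefore $\tfrac14(1+2a-2a^2)=-\tfrac12\bigl(a^2-a-\tfrac12\bigr)$. The product over the two coordinates is $\tfrac14(u^2-u-\tfrac12)(v^2-v-\tfrac12)$, since the two minus signs cancel. Multiplying by $n\cdot(-2/n)=-2$ yields $-\tfrac12\sum_i(u_i^2-u_i-\tfrac12)(v_i^2-v_i-\tfrac12)$.

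\textbf{Constant and conclusion.} The constant $(1/3)^2$ times $n$ gives $n/9$. Assembling the three pieces gives the displayed identity.

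The only remaining point is the identification with Zimmerman's eq.~3. This is a term-by-term comparison of the displayed formula with the published expression, so no enumeration of corner transformations is needed. The main step needing care is the factorization of the corner sum into a product of per-coordinate averages; it requires the corner set to be exactly the full product $\{\mathrm{true},\mathrm{false}\}^d$. The rest is elementary algebra, and the signs in the cross term are the place most likely to produce an error.
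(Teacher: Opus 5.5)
Your proposal is correct and follows the same route as the paper: specialize Theorem~\ref{thm:omegaBar-computing} to $d=2$, substitute the closed forms $\mathrm{pairFactorAvg}=\tfrac14(1-|a_0-b_0|)(1-|a_1-b_1|)$ and $\mathrm{crossFactorAvg}=\tfrac14(a_0^2-a_0-\tfrac12)(a_1^2-a_1-\tfrac12)$, and multiply by $n$. The paper states these closed forms without derivation. Your factorization of the corner sum over $\{\mathrm{true},\mathrm{false}\}^d$ into per-coordinate averages supplies the missing justification, and your sign bookkeeping in the cross term checks out.
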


\begin{proof}
The orbit-averaged factors at $d=2$ have closed forms $\mathrm{pairFactorAvg}(a,b) = \frac{1}{4}(1-|a_0-b_0|)(1-|a_1-b_1|)$ and $\mathrm{crossFactorAvg}(a) = \frac{1}{4}(a_0^2-a_0-\frac{1}{2})(a_1^2-a_1-\frac{1}{2})$. Substituting these identities into Theorem~\ref{thm:omegaBar-computing} and multiplying by $n$ gives the displayed expression, which is Zimmerman's equation~(3).
\end{proof}

\subsection{Finite-dimensional Gaussian limit}\label{sec:fidi-clt}

Under the null ($n$ iid $\mathrm{Uniform}([0,1]^d)$ samples), the normalized corner-$c$ empirical process at a fixed evaluation point $x$ is
\[
  Z_n^c(x) \;=\; \frac{1}{\sqrt{n}}\sum_{i=1}^n\bigl(\mathbf{1}\{X_i\in\mathrm{cornerLe}(c,x)\} - F^c(x)\bigr).
\]

\begin{theorem}[Scalar fidi-CLT, base corner]\label{thm:scalar-clt}
For any $x\in[0,1]^d$, $Z_n^{c_0}(x)$ converges in distribution to $N(0, F^{c_0}(x)(1-F^{c_0}(x)))$.
\end{theorem}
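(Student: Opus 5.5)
This is a direct application of the classical Lindeberg--L\'evy central limit theorem to a sum of iid bounded random variables. Fix $x\in[0,1]^d$ and write $p = F^{c_0}(x) = \prod_{k} x_k$. Define
\[
  \xi_i = \mathbf{1}\{X_i\in\mathrm{cornerLe}(c_0,x)\} - p, \qquad i=1,\dots,n,
\]
so that $Z_n^{c_0}(x) = n^{-1/2}\sum_{i=1}^n \xi_i$. Since the $X_i$ are iid, the $\xi_i$ are iid and bounded.

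\textbf{Step 1: identify the mean.} The indicator is Bernoulli with success probability
\[
  \Prob\bigl(X_i\in\mathrm{cornerLe}(c_0,x)\bigr) = \Prob\bigl(X_{i,k}\le x_k \text{ for all } k\bigr).
\]
Under $\mathrm{Uniform}([0,1]^d)$ the coordinates of $X_i$ are independent $\mathrm{Uniform}([0,1])$. This probability therefore factors as $\prod_k x_k = p$. The same value can be read off as the Lebesgue measure of the box $\prod_k[0,x_k]$, consistent with the Fubini reduction used in Theorem~\ref{thm:computing}. Hence $\E\xi_i = 0$.

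\textbf{Step 2: identify the variance.} Because the indicator is Bernoulli$(p)$, we get $\Var\xi_i = p(1-p)$, which is finite. Both moments are exact, so no approximation enters here.

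\textbf{Step 3: apply the CLT.} Split on whether the variance vanishes.
\begin{itemize}
\item If $0<p<1$, the classical CLT gives $n^{-1/2}\sum_i\xi_i \Rightarrow N(0,p(1-p))$.
\item If $p\in\{0,1\}$ (for instance $x$ has a zero coordinate, or $x=(1,\dots,1)$), then $\xi_i=0$ almost surely. So $Z_n^{c_0}(x)=0$ almost surely, which is the degenerate law $N(0,0)$ matching the claimed variance $p(1-p)=0$.
\end{itemize}
Interpreting $N(0,0)$ as the point mass at zero, the conclusion holds for every $x$.

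The only real subtlety is this degenerate boundary case and the convention that $N(0,0)=\delta_0$. Everything else is the standard iid CLT. Alternatively, one could verify the limit via characteristic functions, since $\E e^{it\xi_1} = 1 - t^2p(1-p)/2 + o(t^2)$ by boundedness.
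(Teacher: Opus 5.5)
Your proposal is correct and follows essentially the same argument as the paper: the summands are iid, bounded and mean-zero with variance $F^{c_0}(x)(1-F^{c_0}(x))$, so the classical iid CLT applies. Your separate treatment of the degenerate case $p\in\{0,1\}$, with $N(0,0)=\delta_0$, makes explicit a convention the paper leaves implicit.
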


\begin{proof}
The summands $\mathbf{1}\{X_i\in\mathrm{cornerLe}(c_0,x)\} - F^{c_0}(x)$ are iid, bounded, mean-zero, with variance $F^{c_0}(x)(1-F^{c_0}(x))$. The classical iid CLT applies.
\end{proof}

The vector-valued, all-corners extension removes the restriction to a single fixed corner: each index $j$ carries its own corner $c_j$ and evaluation point $x_j$.

\begin{theorem}[Vector fidi-CLT, all corners]\label{thm:fidi-clt}
Let $\iota$ be a finite index set, $c : \iota\to\mathrm{Corner}(d)$ a corner assignment, and $x : \iota\to[0,1]^d$ evaluation points. The normalized centered empirical vector
\[
  \bigl(Z_n^{c_j}(x_j)\bigr)_{j\in\iota}
\]
converges jointly in distribution to a mean-zero multivariate Gaussian $N(0,S)$ with covariance
\[
  S_{j,l} \;=\; \Prob\bigl(X\in\mathrm{cornerLe}(c_j,x_j)\cap\mathrm{cornerLe}(c_l,x_l)\bigr) - F^{c_j}(x_j)\cdot F^{c_l}(x_l).
\]
\end{theorem}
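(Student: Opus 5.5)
The approach is to reduce the vector statement to the scalar setting of Theorem~\ref{thm:scalar-clt} via the Cram\'er--Wold device. For each $j\in\iota$, write $A_j=\mathrm{cornerLe}(c_j,x_j)$. Each coordinate is then a normalized sum
\[
  Z_n^{c_j}(x_j)=\frac{1}{\sqrt n}\sum_{i=1}^n Y_{i,j},\qquad Y_{i,j}=\mathbf{1}\{X_i\in A_j\}-F^{c_j}(x_j).
\]
Under the null, the random vectors $Y_i=(Y_{i,j})_{j\in\iota}$ for $i=1,\dots,n$ are iid, because they are fixed measurable functions of the iid $X_i$. Each vector takes values in the bounded cube $[-1,1]^{\iota}$. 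The whole vector is therefore $n^{-1/2}\sum_i Y_i$, a normalized sum of iid bounded random vectors.

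The key steps, in order, are as follows.

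\textbf{Step 1: centering.} Show $\E Y_{i,j}=0$, which amounts to checking $\Prob(X\in A_j)=F^{c_j}(x_j)$. Since $X$ is uniform on $[0,1]^d$, the coordinates of $X$ are independent. By the definition of $\mathrm{cornerLe}$, the probability factorizes as a product over $k$ of $x_{j,k}$ or $1-x_{j,k}$, according to whether $c_j(k)$ is false or true. This product is exactly $F^{c_j}(x_j)$; boundary ties have probability zero.

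\textbf{Step 2: covariance.} Compute $\Cov(Y_{1,j},Y_{1,l})=\E[\mathbf{1}_{A_j}(X)\mathbf{1}_{A_l}(X)]-F^{c_j}(x_j)F^{c_l}(x_l)$. Using $\mathbf{1}_{A_j}\mathbf{1}_{A_l}=\mathbf{1}_{A_j\cap A_l}$, this equals $S_{j,l}$ as stated.

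\textbf{Step 3: Cram\'er--Wold.} Fix $t\in\R^{\iota}$. The scalars $t^\top Y_i$ are iid, bounded, and mean zero, with variance $t^\top S t\ge0$.
\begin{itemize}
  \item If $t^\top St>0$, the classical iid CLT gives $n^{-1/2}\sum_i t^\top Y_i\Rightarrow N(0,t^\top St)$. This is the same argument used for Theorem~\ref{thm:scalar-clt}.
  \item If $t^\top St=0$, the sum has variance zero and is almost surely $0$, which is the degenerate Gaussian $N(0,0)$.
\end{itemize}
Since $t^\top Z$ has law $N(0,t^\top St)$ when $Z\sim N(0,S)$, the Cram\'er--Wold theorem yields joint convergence to $N(0,S)$. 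Alternatively, one can invoke the multivariate Lindeberg--L\'evy CLT directly.

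The argument has no substantive obstacle. The only points needing care are the degenerate case in Step~3, the precise use of the uniform product measure in Step~1, and noting that $S$ is a genuine covariance matrix and hence positive semidefinite. The cross-corner structure of $S$ enters only through the intersection $A_j\cap A_l$. That intersection is itself a box $\prod_k[\ell_k,u_k]$ in each coordinate, so $S_{j,l}$ is explicitly computable as a product of interval lengths minus the product of the two corner CDFs. This explicit form is not needed for the limit statement itself.
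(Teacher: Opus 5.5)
Your proposal is correct and follows essentially the same route as the paper: a Cram\'er--Wold reduction to scalar projections $n^{-1/2}\sum_i t^\top Y_i$ of iid bounded mean-zero variables with variance $t^\top S t$, the classical iid CLT in each direction, and positive semidefiniteness of $S$ from nonnegativity of variances. Your explicit check of the centering identity $\Prob(X\in A_j)=F^{c_j}(x_j)$ and your handling of the degenerate case $t^\top S t=0$ simply fill in details the paper leaves implicit.
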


\begin{proof}
By the Cram\'er--Wold device, it suffices to prove a scalar CLT for every direction $t\in\R^\iota$: the projection $\sum_j t_j Z_n^{c_j}(x_j) = n^{-1/2}\sum_i\sum_j t_j(\mathbf{1}_{ij} - F^{c_j}(x_j))$ is a sum of iid bounded mean-zero random variables (one per sample $i$), with variance $t^\top S\,t$. The classical iid CLT gives convergence to $N(0, t^\top S\,t)$ for each $t$; Cram\'er--Wold identifies the joint limit as $N(0,S)$. Positive-semidefiniteness of $S$ follows from $\Var[\sum_j t_j\mathbf{1}_j]\ge 0$.

The joint probability $\Prob(X\in\mathrm{cornerLe}(c_j,x_j)\cap\mathrm{cornerLe}(c_l,x_l))$ has a closed form per coordinate: for each $k$, the intersection of the two half-lines (one from each corner) is either a half-line (same direction) or an interval (opposite directions, possibly empty), giving a 4-branch product formula.
\end{proof}

\subsection{Existence of the limit process}\label{sec:limit-process}

Theorem~\ref{thm:fidi-clt} gives the joint Gaussian limit at any finite selection of corner--point pairs. Define
\[
  I_d:=\mathrm{Corner}(d)\times[0,1]^d
\]
and let $K_d(j,l)$ denote the covariance in Theorem~\ref{thm:fidi-clt}. The existence of a probability measure on the product process space $\R^{I_d}$ whose finite-dimensional marginals are exactly these limits follows from the Kolmogorov extension theorem.

\begin{theorem}[Limit process exists]\label{thm:limit-process}
There exists a probability measure $\mu_\infty$ on $\R^{I_d}$ whose finite-dimensional marginals are exactly the Gaussian limits of Theorem~\ref{thm:fidi-clt}, for every $d\ge 1$.
\end{theorem}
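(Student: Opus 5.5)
The plan is to apply the Kolmogorov extension theorem to the family of finite-dimensional laws $\{N(0,K_d|_J) : J\subset I_d \text{ finite}\}$. The index set $I_d$ is arbitrary (uncountable), but Kolmogorov's theorem on $\R^{I_d}$ with its product $\sigma$-algebra needs no countability of the index set. It only requires that each finite marginal be a Borel probability measure on the Polish space $\R^J$, and that the family be projectively consistent. So the proof reduces to two checks: that these Gaussian laws are well defined, and that they are consistent under permutation and marginalization.

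\textbf{Step 1 (well-definedness).} For a finite $J=\{(c_j,x_j)\}_{j\in\iota}\subset I_d$, the matrix $K_d|_J=S$ of Theorem~\ref{thm:fidi-clt} is symmetric. It is positive semidefinite, since it is the covariance matrix of the indicator vector $(\mathbf{1}\{X\in\mathrm{cornerLe}(c_j,x_j)\})_j$ for a single uniform $X$; this is the observation already made in that proof. Hence $N(0,S)$ exists as a (possibly degenerate) Gaussian measure on $\R^J$. Its Fourier transform is $t\mapsto\exp(-\tfrac12 t^\top S t)$.

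\textbf{Step 2 (consistency).} Two properties are needed. \emph{Permutation consistency:} relabeling the indices by a permutation $\pi$ replaces $S$ by $P_\pi S P_\pi^\top$, because $K_d(j,l)$ depends only on the pair of points $(c_j,x_j),(c_l,x_l)$ and not on their positions in the list. \emph{Marginalization consistency:} if $J'\subset J$, then the projection of $N(0,K_d|_J)$ onto $\R^{J'}$ is $N(0,K_d|_{J'})$. I will verify this with characteristic functions. Setting $t_j=0$ for $j\notin J'$ in $\exp(-\tfrac12 t^\top K_d|_J\, t)$ gives $\exp(-\tfrac12 t'^\top K_d|_{J'}\, t')$, because the principal submatrix of a restricted kernel is the restriction of the kernel itself.

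An alternative route gives both properties at once. The laws are distributional limits of the projections of the single random element $(Z_n^c(x))_{(c,x)\in I_d}$, and coordinate projections are continuous. Consistency therefore passes to the limit by the continuous mapping theorem together with uniqueness of weak limits. I will mention this as a cross-check.

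\textbf{Step 3.} Kolmogorov's theorem then yields a unique $\mu_\infty$ on $(\R^{I_d},\mathcal{B}(\R)^{\otimes I_d})$ with the prescribed marginals, for every $d\ge1$. I do not expect a real obstacle. The only point needing care is precision about indexing. The finite index sets $\iota$ in Theorem~\ref{thm:fidi-clt} may contain repeated pairs, whereas Kolmogorov's theorem is stated for distinct coordinates, so I will restrict attention to injective assignments $j\mapsto(c_j,x_j)$. Repeated pairs then correspond to degenerate Gaussians obtained by duplicating coordinates, and these agree automatically with the law of $\mu_\infty$ pushed through the duplication map.
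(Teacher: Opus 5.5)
Your proposal is correct and follows essentially the same route as the paper. For each finite $A\subset I_d$ you take $N\bigl(0,(K_d(j,l))_{j,l\in A}\bigr)$, get positive semidefiniteness from Theorem~\ref{thm:fidi-clt}, observe that coordinate projections of these Gaussians retain the corresponding covariance submatrices, and invoke the Kolmogorov extension theorem. Your extra checks (permutation consistency, the characteristic-function verification, the continuous-mapping cross-check, and the handling of repeated index pairs) only make explicit details the paper leaves implicit.
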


\begin{proof}
For each finite $A\subset I_d$, let $\mu_A$ be the centered Gaussian measure on $\R^A$ with covariance matrix $(K_d(j,l))_{j,l\in A}$. This family is projectively consistent because coordinate projections of Gaussian measures retain the corresponding covariance submatrices. The kernel is positive semidefinite by Theorem~\ref{thm:fidi-clt}; hence the Kolmogorov extension theorem yields a measure $\mu_\infty$ on $\R^{I_d}$ with the required marginals.
\end{proof}

\subsection{Process-level weak convergence}\label{sec:chaining}

For $j=(c,x)\in I_d$, let
\[
  f_j(y)=\mathbf 1\{y\in\mathrm{cornerLe}(c,x)\},
  \qquad
  \rho(j,l)=\bigl\|f_j-f_l\bigr\|_{L^2(P_0)},
\]
where $P_0$ is uniform measure on $[0,1]^d$. The empirical process is
\[
  Z_n(j)=\sqrt n\,(P_n-P_0)f_j.
\]

\begin{proposition}[VC structure of the corner class]\label{prop:vc-class}
The class $\mathcal F_d=\{f_j:j\in I_d\}$ is a finite union of uniformly bounded VC classes. It is therefore $P_0$-Donsker and totally bounded under $\rho$.
\end{proposition}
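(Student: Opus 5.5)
The plan is to split the class by corner and treat each piece as a reflected copy of the classical lower-orthant class. Write $\mathcal F_d=\bigcup_{c\in\mathrm{Corner}(d)}\mathcal F_d^c$, where $\mathcal F_d^c=\{f_{(c,x)}:x\in[0,1]^d\}$; this is a union of $2^d$ pieces. For fixed $c$, let $R_c:[0,1]^d\to[0,1]^d$ be the coordinatewise reflection $y\mapsto(\mathrm{reflect}_{c,k}(y_k))_k$. Then $\mathrm{cornerLe}(c,x)=R_c^{-1}\bigl(\mathrm{cornerLe}(c_0,R_c x)\bigr)$, so
\[
\mathcal F_d^c=\{g\circ R_c:\ g\in\mathcal F_d^{c_0}\}.
\]
The base class $\mathcal F_d^{c_0}$ consists of indicators of the lower orthants $\prod_k[0,x_k]$ intersected with $[0,1]^d$. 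This is the classical class of quadrants $\{(-\infty,x]\}$, whose VC dimension is $d$ (index $d+1$ in the van der Vaart--Wellner convention). One can check this directly: no $d+1$ points are shattered, since among $d+1$ points some point is not the unique coordinatewise maximizer in any coordinate, and any orthant containing the others contains it. Composition with the fixed bijection $R_c$ preserves the shattering combinatorics exactly, so each $\mathcal F_d^c$ is VC with the same dimension. All functions are indicators, so the envelope $F\equiv1$ bounds them uniformly.

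A finite union of VC classes is again VC. Its shatter coefficient is at most the sum of the $2^d$ shatter coefficients, which is still polynomial. Alternatively, one can skip this step and use the uniform-entropy bound $\sup_Q N(\varepsilon,\mathcal F_d,L^2(Q))\le 2^d\max_c\sup_Q N(\varepsilon,\mathcal F_d^c,L^2(Q))\le 2^d K(1/\varepsilon)^{2d}$, which comes from the VC entropy bound (van der Vaart--Wellner, Thm.~2.6.7). The uniform entropy integral $\int_0^1\sqrt{\log\sup_Q N(\varepsilon\|F\|_{Q,2},\mathcal F_d,L^2(Q))}\,d\varepsilon$ is therefore finite. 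The class is pointwise measurable, because one may restrict $x$ to rational points and get pointwise limits from right/left continuity of the indicators in $x$ (the choice of closed inequalities handles this). So the uniform-entropy Donsker theorem (vdV--W Thm.~2.5.2) gives that $\mathcal F_d$ is $P_0$-Donsker.

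Total boundedness under $\rho$ follows either from the same entropy bound with $Q=P_0$, or as a standard consequence of the Donsker property (vdV--W, Problem 2.1.2 / Sec.~2.1.2). A direct check is also easy: for fixed $c$, $\rho((c,x),(c,x'))^2=P_0(\text{symmetric difference})\le\sum_k|x_k-x'_k|$, so a finite grid in $[0,1]^d$ for each of the finitely many corners gives an $\varepsilon$-net.

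The main obstacle is the measurability technicality needed to apply the Donsker theorem, not the VC counting. I would handle it via the countable dense subclass argument above: indicators of closed orthants are approximated pointwise from within by rational-parameter orthants.
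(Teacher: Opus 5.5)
Your proposal is correct and takes the same route as the paper's proof. You reflect each corner class onto the lower-orthant class, use finite VC dimension with closure under the $2^d$-fold union and the constant envelope, get measurability from rational thresholds, and invoke the bounded VC-class Donsker theorem; your shattering argument, uniform-entropy bound, and grid-based $\varepsilon$-net spell out details the paper only cites.

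One wording fix: the rational-threshold approximants must shrink to each closed orthant from outside, which is what the right/left continuity you invoke gives. They cannot approach it from within, since inner approximants converge pointwise to a half-open orthant.
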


\begin{proof}
For a fixed corner $c$, the sets $\mathrm{cornerLe}(c,x)$ are coordinatewise lower orthants after reflecting the coordinates selected by $c$. Lower orthants form a VC class of finite dimension, and reflection preserves the VC property. Since there are only $2^d$ corners, $\mathcal F_d$ is a finite union of such classes. Its envelope is the constant function one. Restricting thresholds to rational coordinates gives a countable pointwise-dense subclass, so the standard measurability requirement is satisfied. The bounded VC-class Donsker theorem and its entropy consequences now apply \cite[Chs.~2.5--2.6]{vanderVaartWellner1996}.
\end{proof}

\begin{theorem}[Full process-level weak convergence]\label{thm:process-outer-wc}
Under iid CSR,
\[
  Z_n\rightsquigarrow G
  \qquad\text{in }\ell^\infty(I_d),
\]
in the Hoffmann--J\o rgensen outer-expectation sense, where $G$ is the centered tight Gaussian process with covariance
\[
  \Cov(G(j),G(l))
  =P_0(f_jf_l)-P_0f_j\,P_0f_l.
\]
In particular, outer expectations converge for every bounded Lipschitz functional on $\ell^\infty(I_d)$.
\end{theorem}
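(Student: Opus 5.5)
The plan is to deduce the theorem directly from Proposition~\ref{prop:vc-class}, identifying the limit through Theorems~\ref{thm:fidi-clt} and~\ref{thm:limit-process}. A class $\mathcal F$ is $P_0$-Donsker precisely when $\sqrt n(P_n-P_0)$ converges weakly in $\ell^\infty(\mathcal F)$ to a tight Brownian bridge $\mathbb G_{P_0}$, in the Hoffmann--J\o rgensen sense. The map $j\mapsto f_j$ sends $I_d$ onto $\mathcal F_d$. It may fail to be injective, since distinct corner--point pairs can give the same indicator, e.g.\ on the boundary where $\mathrm{cornerLe}$ degenerates. To handle this I would pull back along that map: composition with $j\mapsto f_j$ is an isometric linear map $\ell^\infty(\mathcal F_d)\to\ell^\infty(I_d)$. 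Continuous mapping then transfers the weak convergence to $Z_n\rightsquigarrow G:=\mathbb G_{P_0}\circ f$ in $\ell^\infty(I_d)$. Tightness is preserved because the image of a tight Borel law under a continuous map is tight.

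Next I would verify the covariance. The Brownian bridge has $\Cov(\mathbb G_{P_0}f,\mathbb G_{P_0}g)=P_0(fg)-P_0f\,P_0g$. With $f=f_j$ and $g=f_l$ this is exactly the displayed kernel, which coincides with $S_{j,l}$ of Theorem~\ref{thm:fidi-clt}. Hence the fidi marginals of $G$ agree with the Gaussian limits there and with the measure $\mu_\infty$ of Theorem~\ref{thm:limit-process} on cylinder sets. So $G$ is a tight, Borel-measurable version of that process, and it has $\rho$-uniformly continuous sample paths. The last property follows because total boundedness of $(I_d,\rho)$ together with asymptotic equicontinuity is the standard characterization in \cite[Thm.~1.5.7 and Sec.~2.1]{vanderVaartWellner1996}.

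The final assertion is the portmanteau characterization of Hoffmann--J\o rgensen convergence for a tight Borel limit: $E^*h(Z_n)\to Eh(G)$ for every bounded Lipschitz $h$, which is \cite[Thm.~1.12.2]{vanderVaartWellner1996}. In fact this is the definition used, once one notes that $\mathrm{BL}_1$ suffices by rescaling.

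The main obstacle is measurability and the sense in which the VC argument applies. The reflected-orthant indicators must be handled as a pointwise-measurable class. For this I would use the countable rational-threshold subclass noted in Proposition~\ref{prop:vc-class}, approximating from the appropriate side in each coordinate (from above for low coordinates and from below for reflected ones) so that the indicators converge pointwise. I would also use the fact that a finite union of VC classes has uniformly bounded entropy integral, with envelope $1$. Given this, \cite[Thm.~2.5.2]{vanderVaartWellner1996} yields the Donsker property, and everything else is bookkeeping.
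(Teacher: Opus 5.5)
Your proposal is correct and follows the paper's route: the Donsker property from Proposition~\ref{prop:vc-class}, then the bounded-VC empirical-process CLT giving the Brownian-bridge limit with covariance $P_0(f_jf_l)-P_0f_j\,P_0f_l$, then identification of its finite-dimensional marginals with Theorems~\ref{thm:fidi-clt} and~\ref{thm:limit-process}. Your extra steps fill in details the paper leaves implicit: the pullback along the generally non-injective map $j\mapsto f_j$ via the isometry $\ell^\infty(\mathcal F_d)\to\ell^\infty(I_d)$, and the one-sided rational threshold approximation for pointwise measurability; the bounded-Lipschitz clause is immediate from the definition, since such functionals are bounded and continuous.
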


\begin{proof}
Proposition~\ref{prop:vc-class} shows that $\mathcal F_d$ is $P_0$-Donsker. The empirical-process central limit theorem for bounded VC classes gives the asserted weak convergence and identifies the covariance kernel. Its finite-dimensional marginals agree with Theorem~\ref{thm:fidi-clt}, hence with the process constructed in Theorem~\ref{thm:limit-process}.
\end{proof}

\begin{corollary}[Asymptotic equicontinuity]\label{thm:equicontinuity}
For every $\varepsilon>0$,
\[
  \lim_{\delta\downarrow0}\limsup_{n\to\infty}
  \Prob^*\!\left(
    \sup_{\rho(j,l)<\delta}|Z_n(j)-Z_n(l)|>\varepsilon
  \right)=0.
\]
The limit process has a version with bounded, uniformly $\rho$-continuous sample paths. Within each fixed corner,
\[
  \rho((c,x),(c,y))^2\le\sum_{k=1}^d|x_k-y_k|,
\]
so this version is continuous in the location coordinate.
\end{corollary}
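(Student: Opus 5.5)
The plan is to read the corollary off Theorem~\ref{thm:process-outer-wc} and Proposition~\ref{prop:vc-class}, using the standard characterization of Donsker classes (van der Vaart--Wellner, Thm.~1.5.7 and Section~2.1.2). The only genuinely new computation is the $L^2$ bound on $\rho$ within a fixed corner.

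\textbf{Step 1: asymptotic equicontinuity.} By Proposition~\ref{prop:vc-class}, $\mathcal F_d$ is $P_0$-Donsker, and $(I_d,\rho)$ is totally bounded. Weak convergence of $Z_n$ to a tight limit in $\ell^\infty(I_d)$ is equivalent to two conditions: fidi convergence, and asymptotic uniform $\rho$-equicontinuity in probability with respect to the $L^2(P_0)$ semimetric. The $L^2(P_0)$ semimetric is equivalent here to the standard-deviation semimetric, because the functions are uniformly bounded by the envelope $1$ and $(I_d,\rho)$ is totally bounded. This equivalence gives the displayed $\lim_{\delta\downarrow0}\limsup_n \Prob^*(\cdots)=0$ directly. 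The measurability caveat is handled, as in Proposition~\ref{prop:vc-class}, by the countable rational-threshold subclass.

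\textbf{Step 2: sample-path regularity of $G$.} The same characterization says that the tight Gaussian limit concentrates on $UC(I_d,\rho)$, the bounded uniformly $\rho$-continuous functions. So $G$ admits a version with bounded, uniformly $\rho$-continuous paths. By Theorem~\ref{thm:process-outer-wc}, this is the same law as the Kolmogorov process of Theorem~\ref{thm:limit-process} on finite-dimensional sets.

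\textbf{Step 3: the metric bound.} Fix $c$ and let $A=\mathrm{cornerLe}(c,x)$ and $B=\mathrm{cornerLe}(c,y)$. Since the $f_j$ are indicators,
\[
\rho^2=P_0(A\triangle B).
\]
After the reflection $\mathrm{reflect}_c$, which preserves $P_0$, both sets become lower orthants $\prod_k[0,x'_k]$ and $\prod_k[0,y'_k]$, with $|x'_k-y'_k|=|x_k-y_k|$. The symmetric difference of two products is contained in the union over $k$ of the sets on which the $k$-th coordinate lies between $x'_k$ and $y'_k$. Each such slab has measure $|x_k-y_k|$, so a union bound gives $\rho^2\le\sum_k|x_k-y_k|$.

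\textbf{Step 4: continuity in location.} Within a fixed corner, $\rho$-continuity then implies continuity in the Euclidean topology on $[0,1]^d$. Indeed, $|x-y|\to0$ forces $\rho\to0$, and uniform $\rho$-continuity of the paths transfers.

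The main obstacle is only bookkeeping in Step 1. One must cite the correct equivalence (tightness plus fidi convergence $\Leftrightarrow$ asymptotic equicontinuity) and confirm that the semimetric appearing there may be taken to be $\rho$ rather than the variance semimetric. I would handle this by noting that the two differ only by the term $|P_0 f_j-P_0 f_l|\le\rho(j,l)$, so the two semimetrics are uniformly equivalent.
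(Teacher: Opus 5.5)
Your proposal follows the paper's own argument: the Donsker property from Proposition~\ref{prop:vc-class} together with van der Vaart and Wellner's Theorem~1.5.7 gives both the equicontinuity and the uniformly $\rho$-continuous version of $G$, and the metric bound comes from the same slab-containment and union-bound step. One correction to your bookkeeping in Step~1: $\rho$ and the standard-deviation semimetric $\rho_{\mathrm{sd}}$ are \emph{not} uniformly equivalent on $I_d$. For instance, $(c_0,(1,\dots,1))$ and $(c_0,(0,\dots,0))$ are at $\rho_{\mathrm{sd}}$-distance $0$ but $\rho$-distance $1$. However, you only need the inequality $\rho_{\mathrm{sd}}\le\rho$. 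It places the pairs with $\rho(j,l)<\delta$ inside those with $\rho_{\mathrm{sd}}(j,l)<\delta$, and it transfers uniform continuity of the sample paths from $\rho_{\mathrm{sd}}$ to $\rho$.
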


\begin{proof}
Both assertions are standard consequences of the Donsker property and asymptotic equicontinuity theorem \cite[Thm.~1.5.7]{vanderVaartWellner1996}. The displayed bound follows by observing that membership in two same-corner orthants can differ only if at least one coordinate lies between the corresponding thresholds, and then applying the union bound under $P_0$.
\end{proof}

\subsection{The \texorpdfstring{CSR$\leftrightarrow$Poisson}{CSR--Poisson} bridge}\label{sec:poisson}

The working definition of CSR throughout this paper is ``$n$ iid $\mathrm{Uniform}([0,1]^d)$ points,'' which is the conditional-on-$n$ form. In the unconditional formulation, the total count is Poisson and, conditional on that count, the locations are iid uniform. This standard construction also gives the Poisson count law in every measurable sub-region \cite[Example~16.2]{vanLieshout2010}.

\begin{theorem}[Homogeneous Poisson point process on the unit cube]\label{thm:poisson-process}
There exists a probability measure $\mathrm{PPM}(\mu,d)$ on the ordered-list space
\[
  \mathrm{PointList}(d)=\coprod_{n\ge0}(\mathrm{Fin}\,n\to[0,1]^d)
\]
such that the count $N$ has law $\mathrm{Poisson}(\mu)$ and, conditional on $N=n$, the $n$ listed locations are iid $\mathrm{Uniform}([0,1]^d)$. Mapping each list to its counting measure gives the usual homogeneous Poisson point process on the unit cube.
\end{theorem}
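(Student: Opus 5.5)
We build $\mathrm{PPM}(\mu,d)$ explicitly as a countable mixture over the summands of the disjoint union. Give $\mathrm{PointList}(d)$ the coproduct $\sigma$-algebra: a set $B$ is measurable iff each slice $B_n=B\cap(\mathrm{Fin}\,n\to[0,1]^d)$ is Borel in $([0,1]^d)^n$. For $\mu\ge0$ we set
\[
  \mathrm{PPM}(\mu,d)(B)\;=\;\sum_{n\ge0} e^{-\mu}\frac{\mu^n}{n!}\,\lambda^{\otimes n}(B_n),
\]
where $\lambda$ is Lebesgue measure on $[0,1]^d$ and $\lambda^{\otimes 0}$ is the unit mass on the empty list. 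Countable additivity holds because each $\lambda^{\otimes n}$ is countably additive, and a nonnegative double series may be rearranged (Tonelli for sums). Total mass is $\sum_n e^{-\mu}\mu^n/n!=1$, so this is a probability measure. This mirrors the ``Poisson count, then iid locations'' description in \cite[Example~16.2]{vanLieshout2010}.

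The count is $N(\ell)=n$ on the $n$-th summand. It is measurable because each slice of $\{N=n\}$ is either empty or the whole summand. Also $\Prob(N=n)=e^{-\mu}\mu^n/n!$, which is the $\mathrm{Poisson}(\mu)$ law. For $n$ with positive probability and measurable $B$ inside the $n$-th summand, $\Prob(B\mid N=n)=\lambda^{\otimes n}(B)$. This is exactly the statement that the listed locations are iid $\mathrm{Uniform}([0,1]^d)$. When $\mu=0$, only $n=0$ has positive probability, and the conditional statement is then trivial or vacuous for the other $n$.

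For the last claim, map a list $(x_1,\dots,x_n)$ to $\sum_i\delta_{x_i}$. This map is measurable into the space of finite counting measures with the $\sigma$-algebra generated by the evaluations $A\mapsto\eta(A)$: on each summand, $\eta(A)=\sum_i\mathbf 1\{x_i\in A\}$ is a Borel function. We then check the defining properties of the image law. Fix disjoint Borel sets $A_1,\dots,A_m$ with complement $A_0$. Conditional on $N=n$, the vector of counts $(\eta(A_0),\dots,\eta(A_m))$ is multinomial with cell probabilities $\lambda(A_r)$. Mixing the multinomial over a $\mathrm{Poisson}(\mu)$ count and summing out $A_0$ gives independent $\mathrm{Poisson}(\mu\lambda(A_r))$ counts; the probability generating function computation $\exp(\mu\sum_r\lambda(A_r)(s_r-1))$ makes this transparent. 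This is the standard characterization of the homogeneous Poisson process with intensity $\mu$ on the cube.

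The main obstacle is measure-theoretic bookkeeping rather than any probabilistic estimate. One point is setting up the coproduct $\sigma$-algebra correctly, including the $n=0$ point. The other is making sure the counting-measure map is measurable, so that the pushforward is well defined. Once the slice-wise description is in place, every claim reduces to a statement about a single summand $([0,1]^d)^n$ with the product measure.
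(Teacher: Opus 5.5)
Your construction matches the paper's: $\mathrm{PPM}(\mu,d)$ is the Poisson-weighted mixture $\sum_{n}\mathrm{Poisson\_pmf}(\mu,n)\,\lambda^{\otimes n}$ over the fibers of the coproduct, with the $\mathrm{Poisson}(\mu)$ count law read off from the fiber weights and the iid uniform conditional law obtained by restricting to a single fiber. The one place you go beyond the paper is the final claim, which the paper only asserts: you check that the counting-measure map is measurable and derive independent $\mathrm{Poisson}(\mu\lambda(A_r))$ counts by mixing the multinomial over the Poisson count, which is correct and makes that step rigorous.
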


\begin{proof}
Define $\mathrm{PPM}(\mu,d) = \sum_{n\ge 0}\mathrm{Poisson\_pmf}(\mu,n)\cdot(\mathrm{Sigma.mk}_n)_*(\mathrm{Uniform}([0,1]^d))^{\otimes n}$, i.e., weight each $n$-fiber's pushed-forward product measure by the Poisson pmf at $n$. The total mass is $\sum_n\mathrm{Poisson\_pmf}(\mu,n) = 1$. Pushing forward the count $N = \mathrm{Sigma.fst}$ gives $\mathrm{Poisson}(\mu)$, while restriction to a fixed fiber gives the stated iid product law. The final assertion follows by mapping a list $(x_i)_{i=1}^n$ to the counting measure $\sum_i\delta_{x_i}$.
\end{proof}

\begin{theorem}[Poisson thinning]\label{thm:thinning}
Let $A\subseteq[0,1]^d$ be measurable with $p=P_0(A)>0$, where $P_0$ is uniform measure on the unit cube. Restricting the process to $A$ yields a Poisson process with mean count $\mu p$ and conditional location law $P_0(\,\cdot\mid A)$. In the ordered-list representation,
\[
  (\mathrm{restrict}_A)_*\,\mathrm{PPM}(\mu,d) \;=\; \mathrm{PPM}_{\mathrm{on}}(\mu\cdot p,\,P_0(\,\cdot\mid A)),
\]
where $P_0(\,\cdot\mid A)$ is the normalized restriction of uniform measure to $A$. If $p=0$, the restricted process is empty almost surely and no conditional law on $A$ is required.
\end{theorem}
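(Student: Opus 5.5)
The plan is to compare the two measures on $\mathrm{PointList}(d)$ fiber by fiber, i.e.\ to show that both assign the same mass to each measurable set inside each $n$-fiber. Here $\mathrm{restrict}_A$ sends a list $(x_1,\dots,x_N)$ to the sublist of points in $A$, with their order preserved. The countable family of fibers, together with the product $\sigma$-algebras on them, generates the $\sigma$-algebra of the coproduct. So it suffices to show, for each $m\ge0$ and each measurable rectangle $B_1\times\cdots\times B_m$ with $B_i\subseteq A$, that
\[
\Prob\bigl(\mathrm{restrict}_A=(y_1,\dots,y_m)\in B_1\times\cdots\times B_m\bigr)
= e^{-\mu p}\frac{(\mu p)^m}{m!}\prod_{i=1}^m \frac{P_0(B_i)}{p}.
\]
Rectangles form a $\pi$-system generating each fiber's $\sigma$-algebra, so Dynkin's $\pi$--$\lambda$ theorem then finishes the argument.

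To compute the left side, I will use the definition of $\mathrm{PPM}(\mu,d)$ from Theorem~\ref{thm:poisson-process} and condition on $N=n\ge m$. Given $N=n$ the points are iid uniform, so I split according to which index set $S\subseteq\{1,\dots,n\}$ with $|S|=m$ lands in $A$. The points indexed by $S$, taken in increasing order, must lie in $B_1,\dots,B_m$, and the remaining $n-m$ points must lie in $A^c$. Independence gives probability $\prod_i P_0(B_i)\,(1-p)^{n-m}$ for each $S$, and there are $\binom nm$ choices of $S$. Summing over $n$ yields
\[
\sum_{n\ge m} e^{-\mu}\frac{\mu^n}{n!}\binom nm (1-p)^{n-m}\prod_i P_0(B_i)
= e^{-\mu}\frac{\mu^m}{m!}\prod_i P_0(B_i)\,e^{\mu(1-p)},
\]
which is exactly the right side above. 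The case $m=0$ is included: there the product is empty and the result is $e^{-\mu p}$.

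The main obstacle is measure-theoretic bookkeeping rather than the computation. First, I must check that $\mathrm{restrict}_A$ is measurable on the coproduct. On each fiber it is a finite case split over the sets $\{x_i\in A \text{ exactly for } i\in S\}$, and on each piece it is a coordinate projection, so it is measurable. Second, I must justify interchanging the sum over $n$ with the sum over $S$. All terms are nonnegative, so Tonelli applies. I will also verify that $\mathrm{PPM}_{\mathrm{on}}(\lambda,Q)$, defined as in Theorem~\ref{thm:poisson-process} but with a general location law $Q$, has precisely the fiber masses $e^{-\lambda}\lambda^m/m!\cdot Q^{\otimes m}$, and I will take $Q=P_0(\cdot\mid A)$, which is well defined because $p>0$.

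Finally, if $p=0$, then almost surely no point lies in $A$, since each fiber has $P_0(A)=0$ and a countable union of null sets is null. Hence the restriction is the empty list almost surely.
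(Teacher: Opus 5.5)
Your proof is correct and follows essentially the paper's route. You decompose each $n$-fiber by the surviving index set $S$, use independence to get the factor $(1-p)^{n-m}$ times the product of the retained coordinates' laws, and sum with Poisson weights. You merge the paper's separate count and value halves into one rectangle computation, and you make explicit, via $\pi$--$\lambda$, the uniqueness step the paper leaves as ``reassembling''.

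One small bookkeeping point remains. Rectangles with $B_i\subseteq A$ generate only the trace $\sigma$-algebra on $A^m$. You should therefore add that both measures are concentrated on $A^m$ in the $m$-fiber. Alternatively, allow arbitrary $B_i$ and replace $P_0(B_i)$ by $P_0(B_i\cap A)$.
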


\begin{proof}
The proof assembles two halves. \emph{Count half:} for $n$ iid draws, the measure of ``$k$ of $n$ land in $A$'' is the binomial probability $\binom{n}{k}p^k(1-p)^{n-k}$ (a disjoint decomposition over which $k$-element subset survives). Summing over $n$ with Poisson weights and applying the Poisson thinning identity $\sum_n\mathrm{Poisson\_pmf}(\mu,n)\binom{n}{k}p^k(1-p)^{n-k} = \mathrm{Poisson\_pmf}(\mu p, k)$ gives the count law $\mathrm{Poisson}(\mu p)$. \emph{Value half:} for a fixed surviving-index set $S$ of size $k$, the retained values reindexed to $\mathrm{Fin}\,k$ are iid with law $P_0(\,\cdot\mid A)$ (marginalizing a product measure down to a sub-family drops the non-surviving coordinates' total mass). Reassembling over all $n$ and $S$ gives the equality of measures.
\end{proof}

\subsection{Strictly stationary mixing extension}\label{sec:mixing-extension}

The iid finite-dimensional limit extends to strictly stationary dependent observations under classical strong-mixing conditions. The argument uses the standard $\alpha$-mixing coefficient and its covariance inequality, absolute summability of the resulting covariance series, the stationary covariance-by-lag variance identity, and a scalar mixing CLT followed by the Cram\'er--Wold device. These ingredients are standard in the theory of strongly mixing sequences \cite{Ibragimov1962,Doukhan1994,Bradley2005,Rio2017}. Let $(Y_i)_{i\in\Z}$ be a strictly stationary sequence of measurable random points, each uniformly distributed on $[0,1]^d$. For a corner--point pair $(c,p)$, write
\[
  \xi_{c,p}(y)=\mathbf 1\{y\in\mathrm{cornerLe}(c,p)\}-F^c(p).
\]
Then $\xi_{c,p}$ is measurable, centered, and bounded in absolute value by one. For $r\ge1$, define the strong-mixing coefficient
\[
  \alpha(r)=\sup_{m\in\Z}
  \alpha\!\left(
    \sigma(Y_i:i\le m),
    \sigma(Y_i:i\ge m+r)
  \right).
\]
Suppose that
\begin{equation}\label{eq:mixing-rate}
  \alpha(r)\le\varphi(r),\qquad r\ge1,
\end{equation}
for a nonnegative function $\varphi:\N\to[0,\infty)$ such that
\begin{equation}\label{eq:weighted-mixing-summability}
  \sum_{r=0}^{\infty}r\,\varphi(r)<\infty.
\end{equation}
For the Gaussian limit, assume additionally that for some $\delta>0$,
\begin{equation}\label{eq:mixing-clt-summability}
  \sum_{r=1}^{\infty}\varphi(r)^{\delta/(2+\delta)}<\infty.
\end{equation}
This is a classical sufficient rate condition for the stationary strong-mixing CLT; the required $(2+\delta)$th moments hold automatically for the bounded indicator projections considered here \cite{Ibragimov1962,Doukhan1994,Rio2017}. This subsection concerns a dependent sequence with uniform one-point marginals. It does not assert that such a sequence is CSR, since CSR additionally requires independence.

\begin{theorem}[All-corners covariance summability]\label{thm:mixing-cov-summable}
For any finite collection of corner--point pairs $(c_a,p_a)_{a=1}^m$ and every $a,b\in\{1,\dots,m\}$,
\[
  \sum_{r\in\Z}
  \left|\Cov\!\left(
    \xi_{c_a,p_a}(Y_0),
    \xi_{c_b,p_b}(Y_r)
  \right)\right|<\infty.
\]
Consequently, the covariance series of every finite Cram\'er--Wold projection $X_i(t)=\sum_{a=1}^m t_a\xi_{c_a,p_a}(Y_i)$ is absolutely summable.
\end{theorem}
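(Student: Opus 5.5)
The plan is to bound each lag covariance by the strong-mixing covariance inequality for bounded random variables and then sum the bounds using the rate assumption \eqref{eq:weighted-mixing-summability}.

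\emph{Step 1: a bound at each lag.} Fix $a,b$ and write $U_i=\xi_{c_a,p_a}(Y_i)$ and $V_i=\xi_{c_b,p_b}(Y_i)$. Both are measurable functions of the single coordinate $Y_i$, centered, and bounded in absolute value by one.

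\begin{itemize}
\item For $r\ge1$: $U_0$ is $\sigma(Y_i:i\le0)$-measurable and $V_r$ is $\sigma(Y_i:i\ge r)$-measurable. The classical covariance inequality for bounded random variables (Ibragimov) then gives
\[
  |\Cov(U_0,V_r)|\le 4\|U_0\|_\infty\|V_r\|_\infty\,\alpha(r)\le 4\alpha(r)\le 4\varphi(r).
\]
Taking $m=0$ in the supremum defining $\alpha(r)$ bounds the $\sigma$-field coefficient by $\alpha(r)$.
\item For $r\le -1$: use strict stationarity to shift indices, so that $\Cov(U_0,V_r)=\Cov(V_0,U_{-r})$. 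This is now a positive-lag covariance with the roles of $a$ and $b$ exchanged. The same inequality bounds it by $4\varphi(|r|)$, now with $V_0$ measurable on the past and $U_{|r|}$ on the future.
\item For $r=0$: the bound $|\Cov(U_0,V_0)|\le1$ holds trivially.
\end{itemize}

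\emph{Step 2: summation.} The lag bounds give
\[
  \sum_{r\in\Z}|\Cov(U_0,V_r)|\le 1+8\sum_{r\ge1}\varphi(r).
\]
For $r\ge1$ we have $\varphi(r)\le r\varphi(r)$, so this is at most $1+8\sum_{r\ge0}r\varphi(r)<\infty$ by \eqref{eq:weighted-mixing-summability}.

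\emph{Step 3: Cram\'er--Wold projections.} For $X_i(t)=\sum_a t_a\xi_{c_a,p_a}(Y_i)$, bilinearity gives
\[
  \Cov(X_0(t),X_r(t))=\sum_{a,b}t_at_b\Cov\bigl(\xi_{c_a,p_a}(Y_0),\xi_{c_b,p_b}(Y_r)\bigr).
\]
The triangle inequality then bounds $\sum_r|\Cov(X_0(t),X_r(t))|$ by $\sum_{a,b}|t_a||t_b|$ times the finite sums from Step 2. Since the collection is finite, this is finite. Alternatively, $X_i(t)$ is itself bounded by $\sum_a|t_a|$, so Steps 1--2 can be applied to it directly.

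The only point needing care is measurability and the negative-lag reduction. The functions $\xi_{c,p}$ are Borel because cornerLe sets are closed boxes, so $U_i$ and $V_i$ are measurable with respect to the stated $\sigma$-fields. Stationarity must be applied jointly to the pair $(Y_0,Y_r)$, which strict stationarity permits. Beyond this, the proof uses only the standard constant-4 covariance inequality.
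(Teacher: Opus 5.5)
Your proof is correct and follows essentially the same route as the paper. Both use the constant-$4$ strong-mixing covariance inequality for bounded variables at each positive lag and handle negative lags by stationarity with the roles exchanged. Both get summability from $\varphi(r)\le r\varphi(r)$, and both treat the projections via $|X_i(t)|\le\|t\|_1$; your bilinearity alternative and explicit treatment of lag $r=0$ are harmless additions.
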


\begin{theorem}[Mixing variance limit]\label{thm:mixing-var-limit}
For $x\in\R^m$, define $Z_i(x)=\sum_{a=1}^m x_a\xi_{c_a,p_a}(Y_i)$ and $\gamma_x(r)=\Cov(Z_0(x),Z_r(x))$. Then
\[
  \lim_{n\to\infty}\frac{1}{n}
  \Var\!\left(\sum_{i=0}^{n-1}Z_i(x)\right)
  =\sum_{r\in\Z}\gamma_x(r)=x^TSx\ge0,
\]
where
\[
  S_{a,b}=\sum_{r\in\Z}
  \Cov\!\left(\xi_{c_a,p_a}(Y_0),\xi_{c_b,p_b}(Y_r)\right).
\]
In particular, the long-run covariance matrix $S$ is positive semidefinite.
\end{theorem}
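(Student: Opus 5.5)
The plan is to reduce everything to the scalar stationary sequence $Z_i(x)$ and apply the covariance-by-lag identity. By strict stationarity, $\Cov(Z_i(x),Z_j(x))=\gamma_x(j-i)$ depends only on the lag, and $\gamma_x(-r)=\gamma_x(r)$. Expanding the variance of the partial sum then gives the exact finite-$n$ identity
\[
  \frac1n\Var\!\left(\sum_{i=0}^{n-1}Z_i(x)\right)
  =\sum_{|r|<n}\left(1-\frac{|r|}{n}\right)\gamma_x(r).
\]

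By bilinearity, $\gamma_x(r)=\sum_{a,b}x_ax_b\Cov(\xi_{c_a,p_a}(Y_0),\xi_{c_b,p_b}(Y_r))$. Theorem~\ref{thm:mixing-cov-summable} therefore gives $\sum_{r\in\Z}|\gamma_x(r)|<\infty$. Concretely, $|\gamma_x(r)|\le\|x\|_1^2\sum_{a,b}|\Cov_{a,b}(r)|$, a finite sum of absolutely summable series.

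Next I would pass to the limit in the Fejér-weighted sum by dominated convergence on $\Z$ with counting measure. The weights $(1-|r|/n)\mathbf 1\{|r|<n\}$ lie in $[0,1]$ and converge pointwise to $1$. The dominating function is $|\gamma_x(r)|$, which is summable. Hence the limit equals $\sum_{r\in\Z}\gamma_x(r)$.

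The same bilinear expansion, together with absolute summability, lets me interchange the finite $(a,b)$ sum with the $r$-sum. This identifies $\sum_r\gamma_x(r)=\sum_{a,b}x_ax_bS_{a,b}=x^TSx$. Each entry $S_{a,b}$ converges absolutely by Theorem~\ref{thm:mixing-cov-summable}, so $S$ is well defined. Symmetry of $S$ follows from stationarity, since $\Cov(\xi_a(Y_0),\xi_b(Y_r))=\Cov(\xi_b(Y_0),\xi_a(Y_{-r}))$, reindexed over $r\in\Z$.

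Nonnegativity comes for free. The left-hand side is a limit of variances divided by $n$, each of which is $\ge0$, so $x^TSx\ge0$ for every $x\in\R^m$, which is positive semidefiniteness.

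The only genuinely delicate step is the summability needed for dominated convergence, and it is already supplied by Theorem~\ref{thm:mixing-cov-summable}. Behind that theorem lies the covariance inequality $|\Cov(f(Y_0),g(Y_r))|\le4\alpha(r)$ for bounded $|f|,|g|\le1$, combined with~\eqref{eq:mixing-rate}. This in fact already needs only $\sum\varphi(r)<\infty$, which~\eqref{eq:weighted-mixing-summability} implies.

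If one prefers to avoid dominated convergence, an alternative is to use~\eqref{eq:weighted-mixing-summability} directly. The error term is $\frac1n\sum_{|r|<n}|r||\gamma_x(r)|\le\frac{8\|x\|_1^2m^2}{n}\sum_r r\varphi(r)\to0$, together with the vanishing tail $\sum_{|r|\ge n}|\gamma_x(r)|\to0$. This route also gives an explicit $O(1/n)$ rate.
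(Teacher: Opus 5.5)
Your proof is correct and follows the paper's argument essentially step for step: the stationary covariance-by-lag (Fej\'er-weighted) identity, absolute summability of $\gamma_x$ from Theorem~\ref{thm:mixing-cov-summable}, dominated convergence to $\sum_{r\in\Z}\gamma_x(r)=x^TSx$, and nonnegativity of $x^TSx$ inherited from the normalized variances. Your extra remarks are valid refinements that the paper does not spell out: the explicit check that $S$ is symmetric, and the alternative $O(1/n)$-rate argument that uses \eqref{eq:weighted-mixing-summability} directly.
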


\begin{theorem}[Finite-dimensional all-corners mixing CLT]\label{thm:mixing-joint-clt}
Under \eqref{eq:mixing-clt-summability},
\[
  \frac{1}{\sqrt n}\sum_{k=0}^{n-1}
  \left(
    \mathbf 1\{Y_k\in\mathrm{cornerLe}(c_j,p_j)\}-F^{c_j}(p_j)
  \right)_{j=1}^m
  \xrightarrow{\;\mathcal D\;}\mathcal N(0,S).
\]
The proofs of Theorems~\ref{thm:mixing-cov-summable}--\ref{thm:mixing-joint-clt} are given in Appendix~\ref{app:mixing-summability}.
\end{theorem}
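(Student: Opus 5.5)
The proof follows the Cramér--Wold route already used for Theorem~\ref{thm:fidi-clt}, with the iid CLT replaced by a scalar strong-mixing CLT.

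Write $V_k=(\xi_{c_j,p_j}(Y_k))_{j=1}^m\in\R^m$ and $S_n=n^{-1/2}\sum_{k=0}^{n-1}V_k$. By the Cramér--Wold device it suffices to show that, for every fixed $x\in\R^m$, the projection
\[
\langle x,S_n\rangle=n^{-1/2}\sum_{k=0}^{n-1}Z_k(x)
\]
converges in distribution to $\mathcal N(0,x^TSx)$. Here $Z_k(x)$ is as in Theorem~\ref{thm:mixing-var-limit}. Once this holds for all $x$, the joint limit is identified as $\mathcal N(0,S)$, and $S$ is a legitimate covariance matrix because it is positive semidefinite by Theorem~\ref{thm:mixing-var-limit}.

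For fixed $x$, I would check the hypotheses of the classical stationary strong-mixing CLT (Ibragimov's theorem in the form given by Doukhan or Rio) for the scalar sequence $(Z_k(x))_{k\ge0}$.

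\textbf{Stationarity.} $Z_k(x)=g(Y_k)$ with $g=\sum_a x_a\xi_{c_a,p_a}$ a fixed measurable function. Hence $(Z_k(x))$ is strictly stationary.

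\textbf{Mixing.} $\sigma(Z_i:i\le m)\subseteq\sigma(Y_i:i\le m)$, and likewise for the future $\sigma$-fields. So the mixing coefficients of $(Z_k(x))$ are bounded by $\alpha(r)\le\varphi(r)$.

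\textbf{Centering and moments.} Each $Z_k(x)$ is centered, since each $\xi$ has mean zero under the uniform marginal. It is also bounded by $\|x\|_1$, so $\E|Z_0(x)|^{2+\delta}<\infty$ for every $\delta$.

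\textbf{Rate.} Condition~\eqref{eq:mixing-clt-summability} is exactly the rate $\sum_r\alpha(r)^{\delta/(2+\delta)}<\infty$ required by that theorem.

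The theorem then yields
\[
n^{-1/2}\sum_{k<n}Z_k(x)\xrightarrow{\mathcal D}\mathcal N(0,\sigma_x^2),\qquad \sigma_x^2=\sum_{r\in\Z}\gamma_x(r),
\]
where the series converges absolutely by Theorem~\ref{thm:mixing-cov-summable}. Theorem~\ref{thm:mixing-var-limit} identifies $\sigma_x^2=x^TSx$.

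\textbf{Degenerate case.} If $x^TSx=0$, the limit should be the point mass at $0$. I would handle this directly rather than through the CLT. Theorem~\ref{thm:mixing-var-limit} gives $\Var(\langle x,S_n\rangle)\to0$, so Chebyshev's inequality gives convergence in probability to $0$. This is consistent with $\mathcal N(0,0)$, and it avoids any nondegeneracy hypothesis that some versions of the mixing CLT impose.

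The main obstacle is matching the precise form of the cited mixing CLT to the hypotheses here: one-sided versus two-sided indexing, and whether the theorem delivers the limit variance as the lag series or requires it to be positive. I would take the version in Rio's monograph for bounded (hence $L^{2+\delta}$) stationary sequences, and use the degenerate-case argument above to cover the zero-variance directions. The remaining steps are bookkeeping.
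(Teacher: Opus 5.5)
Your proposal is correct and follows the paper's proof. Both reduce to one-dimensional projections by Cram\'er--Wold, identify each directional variance as $x^TSx$ via Theorem~\ref{thm:mixing-var-limit}, use boundedness for the $(2+\delta)$th moments, and apply the Ibragimov-type stationary mixing CLT under \eqref{eq:mixing-clt-summability}.

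Your explicit checks that strict stationarity and the mixing bound pass to $Z_k(x)$ are details the paper leaves implicit. The same holds for your Chebyshev argument in directions with $x^TSx=0$. That argument usefully covers the degenerate case, which the usual statement of Ibragimov's theorem (assuming $\sigma>0$) does not handle.
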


\subsection{Monte Carlo validation}\label{sec:montecarlo}

The computing formula of Theorem~\ref{thm:omegaBar-computing} is directly translatable into numerical code. Table~\ref{tab:montecarlo} reports Monte Carlo null distribution percentiles for $d=1,\dots,5$ and $n\in\{30,50,60,100\}$, with $B=10{,}000$ replications.

\begin{table}[htbp]
\caption{Monte Carlo null distribution of $n\bar\omega^2_d$ under CSR ($B=10{,}000$ replications, seed 20260722).}\label{tab:montecarlo}
\begin{tabular*}{\textwidth}{@{\extracolsep\fill}lcccccc}
\toprule
$d$ & $n$ & mean & sd & p90 & p95 & p99 \\
\midrule
1 & 30  & 0.1683 & 0.1481 & 0.3500 & 0.4646 & 0.7278 \\
1 & 50  & 0.1662 & 0.1491 & 0.3424 & 0.4574 & 0.7492 \\
1 & 60  & 0.1656 & 0.1502 & 0.3409 & 0.4492 & 0.7467 \\
1 & 100 & 0.1666 & 0.1484 & 0.3460 & 0.4654 & 0.7274 \\
2 & 30  & 0.1383 & 0.0724 & 0.2296 & 0.2778 & 0.3954 \\
2 & 50  & 0.1381 & 0.0713 & 0.2317 & 0.2787 & 0.3809 \\
2 & 60  & 0.1399 & 0.0730 & 0.2349 & 0.2842 & 0.3916 \\
2 & 100 & 0.1390 & 0.0728 & 0.2326 & 0.2818 & 0.3930 \\
3 & 30  & 0.0880 & 0.0307 & 0.1291 & 0.1473 & 0.1865 \\
3 & 50  & 0.0879 & 0.0313 & 0.1289 & 0.1478 & 0.1891 \\
3 & 60  & 0.0880 & 0.0303 & 0.1276 & 0.1461 & 0.1861 \\
3 & 100 & 0.0880 & 0.0304 & 0.1284 & 0.1462 & 0.1862 \\
4 & 30  & 0.0501 & 0.0121 & 0.0657 & 0.0728 & 0.0875 \\
4 & 50  & 0.0501 & 0.0121 & 0.0660 & 0.0730 & 0.0882 \\
4 & 60  & 0.0502 & 0.0123 & 0.0665 & 0.0733 & 0.0879 \\
4 & 100 & 0.0502 & 0.0124 & 0.0665 & 0.0733 & 0.0889 \\
5 & 30  & 0.0271 & 0.0047 & 0.0333 & 0.0357 & 0.0411 \\
5 & 50  & 0.0271 & 0.0046 & 0.0333 & 0.0358 & 0.0404 \\
5 & 60  & 0.0271 & 0.0047 & 0.0334 & 0.0359 & 0.0414 \\
5 & 100 & 0.0270 & 0.0047 & 0.0331 & 0.0357 & 0.0411 \\
\botrule
\end{tabular*}
\end{table}

The null mean is available exactly for every sample size. Indeed,
\[
  \E[n\bar\omega_d^2]
  = \frac{1}{2^d}\sum_c\int_{[0,1]^d}F^c(x)\bigl(1-F^c(x)\bigr)\,dx
  = \frac{1}{2^d}-\frac{1}{3^d},
\]
because $\int F^c=2^{-d}$ and $\int(F^c)^2=3^{-d}$ for every corner. Thus the exact means for $d=1,\dots,5$ are approximately $0.1667$, $0.1389$, $0.0880$, $0.0502$, and $0.0271$, respectively, in agreement with the simulated values. The stability of the means across $n$ is therefore an exact finite-sample property, not evidence by itself of distributional convergence. The reported quantiles provide simulation-based critical values at the listed $(d,n)$ combinations; their Monte Carlo uncertainty is largest in the extreme tail.

\subsection{Finite-sample size}\label{sec:size}

To assess the Monte Carlo null calibration presented in Table~\ref{tab:montecarlo}, we perform a finite-sample size study. For each dimension $d \in \{1,\dots,5\}$ and sample size $n \in \{30,50,60,100\}$, we generate $B = 5{,}000$ additional independent realizations under CSR. For each realization, the statistic $n\bar\omega^2_d$ is computed and compared against the corresponding simulated $p_{95}$ critical value from Table~\ref{tab:montecarlo}.

Table~\ref{tab:size} reports the empirical rejection rates at nominal level $\alpha=0.05$, together with approximate $95\%$ binomial margins of error. The observed rates are compatible with $0.05$ at the resolution of $5{,}000$ replications. This supports the internal calibration of the reported experiment but does not replace an independently reproducible implementation.

\begin{table}[htbp]
\caption{Empirical size of the test at nominal $\alpha = 0.05$ over $B = 5{,}000$ additional CSR replications.}\label{tab:size}
\begin{tabular*}{\textwidth}{@{\extracolsep\fill}lcccc}
\toprule
$d$ & $n=30$ & $n=50$ & $n=60$ & $n=100$ \\
\midrule
1 & $0.0532 \pm 0.0062$ & $0.0506 \pm 0.0061$ & $0.0544 \pm 0.0063$ & $0.0546 \pm 0.0063$ \\
2 & $0.0460 \pm 0.0058$ & $0.0460 \pm 0.0058$ & $0.0458 \pm 0.0058$ & $0.0448 \pm 0.0057$ \\
3 & $0.0480 \pm 0.0059$ & $0.0484 \pm 0.0059$ & $0.0482 \pm 0.0059$ & $0.0538 \pm 0.0063$ \\
4 & $0.0536 \pm 0.0062$ & $0.0512 \pm 0.0061$ & $0.0494 \pm 0.0060$ & $0.0528 \pm 0.0062$ \\
5 & $0.0552 \pm 0.0063$ & $0.0432 \pm 0.0056$ & $0.0498 \pm 0.0060$ & $0.0522 \pm 0.0062$ \\
\bottomrule
\end{tabular*}
\end{table}

\subsection{Consistency and power against specified alternatives}\label{sec:consistency}

We now establish consistency under a stated class of fixed iid alternatives and examine finite-sample rejection rates for selected alternatives.

\begin{proposition}[Consistency of the multivariate CvM test]\label{prop:consistency}
Let $P$ be a probability distribution on $[0,1]^d$, with lower-orthant distribution function $F_P$, and let $F_0(x)=\prod_{k=1}^d x_k$ be the uniform distribution function. If $F_P\ne F_0$ on a set of positive Lebesgue measure, then under iid sampling from $P$,
\[
  n\bar\omega^2_d \xrightarrow{P} \infty \quad \text{as } n\to\infty.
\]
\end{proposition}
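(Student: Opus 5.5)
The plan is to show that $\bar\omega^2_d$ converges almost surely (hence in probability) to a strictly positive constant $\Delta>0$. Then $n\bar\omega^2_d\ge n\Delta/2$ with probability tending to one, which gives divergence.

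\emph{Step 1: isolate the base corner.} Every $\omega^2_c\ge0$, so
\[
  \bar\omega^2_d\;\ge\;2^{-d}\,\omega^2_{c_0}
  =2^{-d}\int_{[0,1]^d}\bigl(F_n^{c_0}(x)-F_0(x)\bigr)^2\,dx .
\]
It therefore suffices to show that $\liminf_n\omega^2_{c_0}>0$ almost surely. Here $F_n^{c_0}$ is the ordinary lower-orthant EDF of the sample drawn from $P$, and $F^{c_0}=F_0$.

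\emph{Step 2: uniform convergence of the EDF under $P$.} The lower-orthant indicators form a VC class, as noted in the proof of Proposition~\ref{prop:vc-class}. This VC property does not depend on the sampling law, so the class is $P$-Glivenko--Cantelli, and
\[
  \sup_x|F_n^{c_0}(x)-F_P(x)|\to0\quad\text{a.s.}
\]
Alternatively, the classical multivariate Glivenko--Cantelli theorem gives the same conclusion.

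\emph{Step 3: pass to the limit.} All functions involved take values in $[0,1]$. The integrand $(F_n^{c_0}-F_0)^2$ therefore converges uniformly to $(F_P-F_0)^2$, and
\[
  \omega^2_{c_0}\to\Delta_0:=\int_{[0,1]^d}(F_P-F_0)^2\,dx\quad\text{a.s.}
\]
By hypothesis, $F_P\ne F_0$ on a set of positive Lebesgue measure, so $\Delta_0>0$. Setting $\Delta=2^{-d}\Delta_0$, we get $\Pr(n\bar\omega^2_d\le M)\le\Pr(\bar\omega^2_d<\Delta/2)\to0$ for every fixed $M$ once $n>2M/\Delta$. This is exactly $n\bar\omega^2_d\xrightarrow{P}\infty$.

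The only step that needs care is Step 2, namely invoking Glivenko--Cantelli under an arbitrary $P$, possibly with atoms, rather than under $P_0$. I would handle this by citing the distribution-free VC Glivenko--Cantelli theorem in \cite{vanderVaartWellner1996}, together with the countable rational-threshold measurability argument already given in Proposition~\ref{prop:vc-class}. One could even avoid uniformity altogether. Pointwise, the strong law gives $F_n^{c_0}(x)\to F_P(x)$ for each fixed $x$, and since everything is bounded by $1$, dominated convergence then yields $\omega^2_{c_0}\to\Delta_0$. This requires only a Fubini argument to pass from pointwise almost-sure convergence to almost-sure convergence for almost every $x$.
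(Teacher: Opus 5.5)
Your proposal is correct and follows essentially the paper's argument: Glivenko--Cantelli gives almost-sure convergence of the integrated squared EDF discrepancy to a limit whose base-corner term is strictly positive. The only difference is that you bound $\bar\omega^2_d\ge 2^{-d}\omega^2_{c_0}$ and discard the other corners by nonnegativity, whereas the paper passes to the limit in every corner; this is a harmless streamlining, and your argument also yields the almost-sure divergence that the paper states.
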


\begin{proof}
By \eqref{eq:omegaBarSqD}, the statistic is an average over the $2^d$-corner orbit:
\[
  \bar\omega^2_d \;=\; \frac{1}{2^d}\sum_{c\in\mathrm{Corner}(d)}\int_{[0,1]^d}\bigl(F_n^c(x) - F_0^c(x)\bigr)^2\,dx.
\]
For each corner $c$, let $F_P^c(x)=P(\mathrm{cornerLe}(c,x))$. The multivariate Glivenko--Cantelli theorem gives uniform almost-sure convergence of $F_n^c$ to $F_P^c$. Since all functions involved are bounded by one,
\[
  \int_{[0,1]^d}\bigl(F_n^c(x) - F_0^c(x)\bigr)^2\,dx \xrightarrow{a.s.} \int_{[0,1]^d}\bigl(F_P^c(x) - F_0^c(x)\bigr)^2\,dx.
\]
Summing over corners, we obtain $\bar\omega^2_d \xrightarrow{a.s.} \bar\omega^2_{d, \infty}$, where
\[
  \bar\omega^2_{d,\infty} \;=\; \frac{1}{2^d}\sum_{c\in\mathrm{Corner}(d)}\int_{[0,1]^d}\bigl(F_P^c(x) - F_0^c(x)\bigr)^2\,dx.
\]
The base-corner term is already strictly positive because $F_P\ne F_0$ on a set of positive measure. Hence $\bar\omega^2_{d,\infty}>0$, and $n\bar\omega^2_d=n\bar\omega^2_{d,\infty}+o_{a.s.}(n)\to\infty$ almost surely.
\end{proof}

To examine finite-sample behavior, we simulate iid coordinate-wise Beta alternatives $\mathrm{Beta}(\alpha_0,\alpha_0)^d$ for $d\in\{1,2,3\}$ and $n\in\{30,50,100\}$, using $B=5{,}000$ replications. The case $\alpha_0=0.5$ is boundary- and corner-concentrated, while $\alpha_0=2$ is center-concentrated. Both are first-order inhomogeneous iid alternatives; neither represents interaction-driven clustering or inhibition.

Table~\ref{tab:power-beta} reports empirical rejection rates at $\alpha=0.05$. In these configurations the rejection rate increases with $n$, as predicted by Proposition~\ref{prop:consistency}. It also increases with $d$ because the chosen alternatives impose the same marginal departure independently in every coordinate; this should not be interpreted as a universal monotonicity-in-dimension result.

\begin{table}[htbp]
\caption{Empirical power of $n\bar\omega^2_d$ at $\alpha = 0.05$ against $\mathrm{Beta}(\alpha_0, \alpha_0)^d$ alternatives ($B = 5{,}000$ replications).}\label{tab:power-beta}
\begin{tabular*}{\textwidth}{@{\extracolsep\fill}lcccc}
\toprule
$d$ & $\alpha_0$ & $n=30$ & $n=50$ & $n=100$ \\
\midrule
1 & 0.5 & 0.2666 & 0.4606 & 0.8410 \\
1 & 2.0 & 0.0882 & 0.2512 & 0.7198 \\
2 & 0.5 & 0.4768 & 0.7582 & 0.9906 \\
2 & 2.0 & 0.2568 & 0.6256 & 0.9868 \\
3 & 0.5 & 0.6406 & 0.9178 & 1.0000 \\
3 & 2.0 & 0.4794 & 0.8738 & 0.9998 \\
\bottomrule
\end{tabular*}
\end{table}

\subsection{Power comparison with competing tests}\label{sec:power-comparison}

We compare the statistical power of the proposed origin-invariant bivariate CvM statistic $\bar\omega^2_2$ against three standard competitors in $\mathbb R^2$:
\begin{enumerate}
  \item The symmetric discrepancy $D_2^{(S)}$ of Chiu and Liu \cite{ChiuLiu2009}, which averages the CvM corner functional over only the $2^{d-1} = 2$ even-parity corners.
  \item The classical Clark--Evans nearest-neighbor index $R$ \cite{ClarkEvans1954}, using $|R - 1|$ as a two-sided test statistic to capture both clumping and regularity.
  \item Ripley's centered $L$-function maximum deviation statistic $L_{\max}=\sup_{0\le r\le0.25}|\widehat L(r)-r|$ \cite{Diggle2003}.
\end{enumerate}
All tests are calibrated under CSR using $B_{\mathrm{calib}}=3{,}000$ simulations to estimate upper $0.95$ critical quantiles. Rejection rates are then estimated from $B=1{,}000$ replications for $n=50$ and $n=100$ under four reported simulation designs: independent coordinate-wise Beta$(0.5,0.5)$ points; a Mat\'ern cluster process ($\kappa=10$, scale $0.08$, $\mu=10$); a hard-core process ($r=0.045$); and a Strauss process ($r=0.08$, $\gamma=0.1$, using 30 simulation sweeps). Because the code, initialization, edge corrections, and count-conditioning rules are not included in the present manuscript package, these comparisons should be read as illustrative rather than definitive benchmarks.

Table~\ref{tab:comparison} shows the empirical rejection rates. In the two inhomogeneous or clustered designs shown, $\bar\omega^2_2$ has higher estimated rejection rates than the Chiu--Liu even-parity average (for example, $0.7440$ versus $0.6280$ for the Beta design and $0.9110$ versus $0.8480$ for the Mat\'ern design at $n=50$). The experiment does not establish uniform power dominance or robustness over a broader alternative class.

For the reported inhibition designs, the distance-based statistics have much larger rejection rates. The upper-tailed CvM calibration is poorly aligned with these alternatives because the observed CvM values tend to fall in the lower tail. A lower-tailed or two-sided calibration is therefore required if regularity is an intended alternative. This limitation is consistent with Zimmerman's observation that the EDF-based statistic is primarily effective for heterogeneous alternatives and can be weak against regular or aggregated interaction alternatives \cite{Zimmerman1993}.

\begin{table}[htbp]
\caption{Power comparison for $d=2$ at significance level $\alpha = 0.05$ ($B = 1{,}000$ replications).}\label{tab:comparison}
\begin{tabular*}{\textwidth}{@{\extracolsep\fill}llcc}
\toprule
Alternative & Test & $n=50$ & $n=100$ \\
\midrule
\multicolumn{4}{l}{\textit{Symmetric Beta clustering ($\mathrm{Beta}(0.5,0.5)^2$)}} \\
  & Our $\bar\omega^2_2$  & 0.7440 & 0.9860 \\
  & Chiu--Liu $D_2^{(S)}$ & 0.6280 & 0.9780 \\
  & Clark--Evans $|R-1|$  & 0.0540 & 0.0240 \\
  & Ripley $L_{\max}$     & 0.8100 & 0.9940 \\
\midrule
\multicolumn{4}{l}{\textit{Mat\'ern cluster process}} \\
  & Our $\bar\omega^2_2$  & 0.9110 & 0.9660 \\
  & Chiu--Liu $D_2^{(S)}$ & 0.8480 & 0.9330 \\
  & Clark--Evans $|R-1|$  & 0.9990 & 1.0000 \\
  & Ripley $L_{\max}$     & 0.9890 & 0.9800 \\
\midrule
\multicolumn{4}{l}{\textit{Hard-core process}} \\
  & Our $\bar\omega^2_2$  & 0.0110 & 0.0000 \\
  & Chiu--Liu $D_2^{(S)}$ & 0.0130 & 0.0000 \\
  & Clark--Evans $|R-1|$  & 0.6750 & 1.0000 \\
  & Ripley $L_{\max}$     & 0.2720 & 1.0000 \\
\midrule
\multicolumn{4}{l}{\textit{Strauss process}} \\
  & Our $\bar\omega^2_2$  & 0.0010 & 0.0000 \\
  & Chiu--Liu $D_2^{(S)}$ & 0.0020 & 0.0000 \\
  & Clark--Evans $|R-1|$  & 0.9970 & 1.0000 \\
  & Ripley $L_{\max}$     & 0.8240 & 0.9400 \\
\bottomrule
\end{tabular*}
\end{table}

\section{Real-data illustration}\label{sec:application}

We illustrate the statistic on three bivariate datasets distributed with the \texttt{spatstat.data} package \cite{BaddeleyRubakTurner2015}:
\begin{enumerate}
  \item \textbf{Swedish pines} ($n=71$): pine saplings observed in a rectangular $9.6\,\mathrm{m}\times10\,\mathrm{m}$ forest plot;
  \item \textbf{Redwood seedlings} ($n=62$): Ripley's unit-square subset of the California redwood seedling and sapling data, commonly used as a clustered example;
  \item \textbf{Biological cells} ($n=42$): a pattern of biological cell centers commonly used as an example of regular spacing.
\end{enumerate}
Each coordinate is affinely rescaled to $[0,1]$. We compute $n\bar\omega^2_2$ and estimate upper- and lower-tail probabilities from $B=50{,}000$ CSR simulations matched to the observed count. The two tails are descriptive diagnostics: a large value indicates a large EDF departure, whereas an unusually small value may occur for highly regular configurations. Monte Carlo $p$-values are reported using the correction $(b+1)/(B+1)$ and therefore are never zero \cite{PhipsonSmyth2010}.

Figure~\ref{fig:real-data} displays the rescaled patterns, and Table~\ref{tab:real-data} summarizes the numerical results. The biological-cells statistic lies below every simulated null value, giving the corrected lower-tail value $1/50{,}001\approx0.00002$. The redwood and Swedish-pines values are not significant in the displayed one-sided tails. These examples illustrate differing sensitivity of the statistic but do not establish that it is an omnibus diagnostic for all forms of clustering or inhibition.

\begin{figure}[htbp]
\centering
\includegraphics[width=\textwidth]{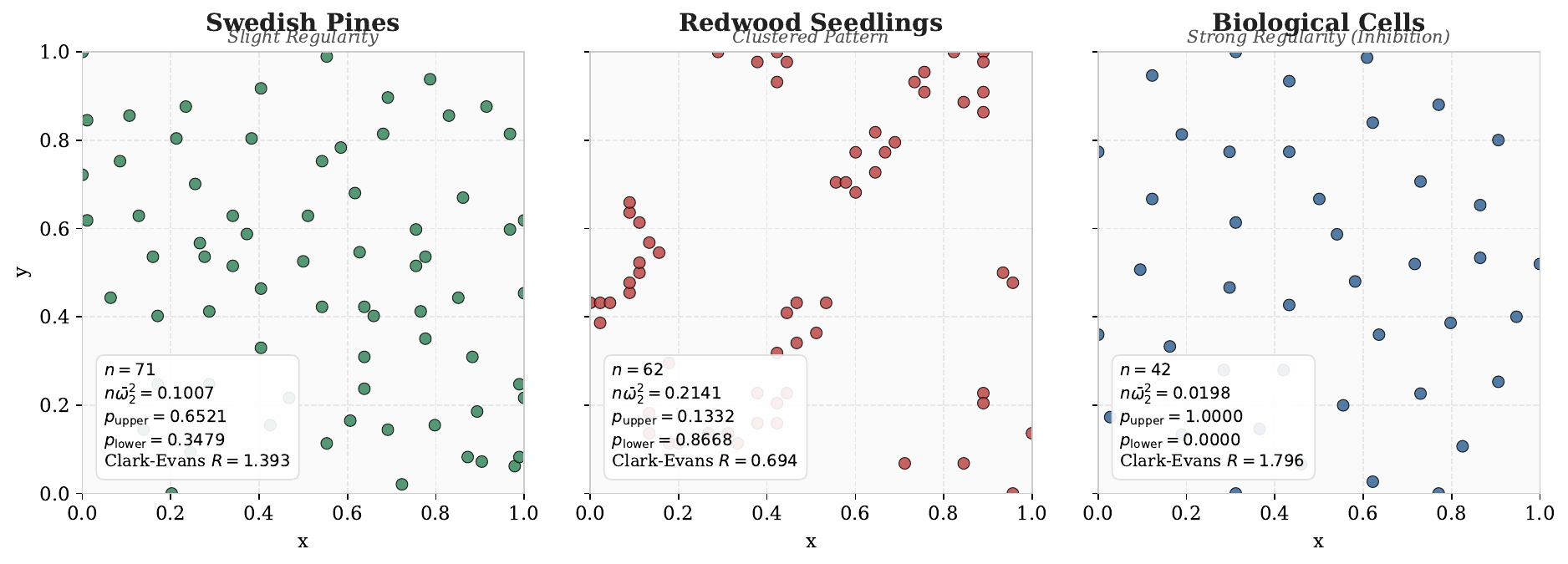}
\caption{The Swedish-pines, redwood-seedlings, and biological-cells point patterns after affine rescaling to $[0,1]^2$. Insets report the observed $n\bar\omega_2^2$, Monte Carlo tail probabilities, and Clark--Evans index. Probabilities in the figure are rounded to four decimal places; Table~\ref{tab:real-data} reports the corrected biological-cells lower-tail probability to greater precision.}
\label{fig:real-data}
\end{figure}

\begin{table}[htbp]
\caption{Observed statistics and Monte Carlo $p$-values ($B = 50{,}000$ null simulations) for the spatial datasets.}\label{tab:real-data}
\begin{tabular*}{\textwidth}{@{\extracolsep\fill}lccccc}
\toprule
Dataset & $n$ & $n\bar\omega^2_2$ & $p_{\mathrm{upper}}$ & $p_{\mathrm{lower}}$ & Clark--Evans $R$ \\
\midrule
Swedish pines      & 71 & 0.1007 & 0.6521 & 0.3479 & 1.393 \\
Redwood seedlings  & 62 & 0.2141 & 0.1332 & 0.8668 & 0.694 \\
Biological cells   & 42 & 0.0198 & 1.0000 & 0.00002 & 1.796 \\
\bottomrule
\end{tabular*}
\end{table}

\section{Asymptotic null law}\label{sec:null-law}

Let $I_d=\mathrm{Corner}(d)\times[0,1]^d$ and equip it with the product measure
\[
  \nu_d=2^{-d}\sum_{c\in\mathrm{Corner}(d)}\delta_c\otimes dx.
\]
Then $n\bar\omega_d^2=Q(Z_n)$, where $Q(g)=\int_{I_d}g(j)^2\,d\nu_d(j)$. The map $Q$ is continuous under the supremum norm on bounded functions. The process-level weak convergence from Theorem~\ref{thm:process-outer-wc} establishes $Z_n\rightsquigarrow G$ in $\ell^\infty(I_d)$ in the sense of van der Vaart and Wellner \cite[\S1.9]{vanderVaartWellner1996}. Applying the continuous mapping theorem to $Q$ yields $n\bar\omega_d^2\Rightarrow Q(G)$.

\begin{theorem}[Asymptotic null law]\label{thm:null-dist}
Under CSR,
\[
  n\bar\omega^2_d \xrightarrow{d} \sum_{k=1}^\infty \lambda_k Z_k^2,
\]
where $\{Z_k\}$ are iid $N(0,1)$ and $\{\lambda_k\}$ are the eigenvalues of the covariance operator $T_d$ on $\mathcal H_d=L^2(I_d,\nu_d)$,
\[
  (T_df)(j)=\int_{I_d}K_d(j,l)f(l)\,d\nu_d(l).
\]
For $d=1$ this specializes to the classical $\sum_k Z_k^2/(k^2\pi^2)$ \cite{KacSiegert1947,AndersonDarling1952,DurbinKnott1972}; for $d=2$ it is the origin-invariant bivariate limit studied by Zimmerman \cite{Zimmerman1994}.
\end{theorem}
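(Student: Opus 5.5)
We prove the theorem in three steps. First, transfer the process-level limit to the statistic. Second, represent the quadratic functional of the Gaussian limit through a Karhunen--Lo\`eve expansion. Third, identify the two special cases.

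\emph{Step 1: continuous mapping.} We have $n\bar\omega^2_d=Q(Z_n)$ with $Q(g)=\int_{I_d}g^2\,d\nu_d$. Since $\nu_d$ is a probability measure, $|Q(g)-Q(h)|\le(\|g\|_\infty+\|h\|_\infty)\|g-h\|_\infty$, so $Q$ is continuous on $\ell^\infty(I_d)$. It is also measurable on the separable support of $G$, because $G$ is tight. Theorem~\ref{thm:process-outer-wc} and the continuous mapping theorem \cite[Thm.~1.3.6]{vanderVaartWellner1996} then give $n\bar\omega_d^2\rightsquigarrow Q(G)$. We use the version of $G$ from Corollary~\ref{thm:equicontinuity}, which is bounded and continuous in the location coordinate. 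This makes $(j,\omega)\mapsto G(j)(\omega)$ jointly measurable, so $Q(G)=\|G\|^2_{\mathcal H_d}$ is a genuine random variable with values in $\mathcal H_d$.

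\emph{Step 2: spectral representation.} The kernel $K_d$ is bounded by $1/4$, symmetric, and positive semidefinite by Theorem~\ref{thm:fidi-clt}. Because $\nu_d$ is finite, $K_d\in L^2(\nu_d\otimes\nu_d)$, so $T_d$ is a self-adjoint, nonnegative Hilbert--Schmidt operator on $\mathcal H_d$. We show $T_d$ is trace class as follows. By Fubini, $\E\|G\|^2_{\mathcal H_d}=\int K_d(j,j)\,d\nu_d(j)=2^{-d}-3^{-d}<\infty$. Hence $G$ is a centered Gaussian random element of $\mathcal H_d$, and its covariance operator is $T_d$, since $\E\langle G,f\rangle\langle G,g\rangle=\langle T_df,g\rangle$ by Fubini. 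A covariance operator with finite second moment is trace class with $\operatorname{tr}T_d=\E\|G\|^2$. Alternatively, $K_d(j,j)$ is bounded and Mercer's theorem applies on each corner slice, because $K_d$ is continuous in $x$.

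Let $(\lambda_k,e_k)$ be the spectral decomposition of $T_d$ on the closure of its range. Then $\xi_k=\langle G,e_k\rangle$ are jointly Gaussian with covariance $\lambda_k\delta_{kl}$, so they are independent. Write $\xi_k=\sqrt{\lambda_k}Z_k$ for $\lambda_k>0$. On $\ker T_d$ the projection of $G$ has zero variance and vanishes almost surely. Parseval then gives $Q(G)=\sum_k\xi_k^2=\sum_k\lambda_kZ_k^2$ almost surely, and the series converges in $L^1$ because $\sum\lambda_k<\infty$.

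\emph{Step 3: the special cases.} For $d=1$, Corollary~\ref{cor:d1-orbit} shows that both corner functionals coincide. Equivalently, $G(\mathrm{high},x)=-G(\mathrm{low},x)$ almost surely, since $\mathbf 1\{y\ge x\}-(1-x)=-(\mathbf 1\{y\le x\}-x)$ up to a null set. Hence $Q(G)=\int_0^1 B(x)^2\,dx$ with $B$ a Brownian bridge. The eigenproblem for $\min(s,t)-st$ gives $\lambda_k=1/(k^2\pi^2)$, which is the classical Kac--Siegert result. For $d=2$, the limit is by construction the same quadratic functional of the same Gaussian process, so it is Zimmerman's limit, by citation.

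The main obstacle is the measurability and trace-class bookkeeping in Step 2, namely justifying that the $\ell^\infty$-valued limit $G$ can be viewed as a Gaussian element of $\mathcal H_d$ with covariance operator exactly $T_d$. We plan to handle it through the continuous version of $G$ and the identity $\E\|G\|^2=\operatorname{tr}T_d$.
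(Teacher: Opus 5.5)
Your proof is correct and follows the same route as the paper: the continuous mapping theorem applied to $Q$ via Theorem~\ref{thm:process-outer-wc}, then the Karhunen--Lo\`eve expansion of $G$ as a centered Gaussian element of $\mathcal H_d$ whose covariance operator $T_d$ is trace class with trace $2^{-d}-3^{-d}$. You spell out the measurability details and the $d=1$ Brownian-bridge reduction that the paper leaves to Appendix~\ref{app:spectral} and to citations (one small slip: $Q(G)=\|G\|^2_{\mathcal H_d}$ is a real-valued random variable, and it is $G$ that takes values in $\mathcal H_d$).
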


\begin{proof}
By Theorem~\ref{thm:process-outer-wc} and the continuous mapping theorem applied to $Q$, $n\bar\omega_d^2\Rightarrow Q(G)$. The covariance-operator properties and trace identity are proved in Theorems~\ref{thm:op-self-adjoint} and~\ref{thm:nonneg-eigen}. The Karhunen--Lo\`eve expansion of the centered Gaussian element $G\in\mathcal H_d$ therefore gives $Q(G)=\sum_k\lambda_kZ_k^2$ \cite{GineNickl2016}.
\end{proof}

\section{Discussion}\label{sec:discussion}

\subsection{Summary of established results}

The finite-sample construction is complete. The per-corner and orbit-averaged $O(n^2)$ formulas are given by Theorems~\ref{thm:computing} and~\ref{thm:omegaBar-computing}, while Theorems~\ref{thm:d1-regression} and~\ref{thm:d2-regression} establish the exact reductions to the classical univariate statistic and Zimmerman's bivariate statistic. The Poisson formulation of CSR and its restriction law are supplied by Theorems~\ref{thm:poisson-process} and~\ref{thm:thinning}.

Under iid CSR, Theorem~\ref{thm:fidi-clt} gives the joint all-corners finite-dimensional Gaussian limit and Theorem~\ref{thm:limit-process} constructs the corresponding Gaussian process. Proposition~\ref{prop:vc-class} identifies the bounded VC structure of the index class, yielding full process-level weak convergence and asymptotic equicontinuity in Theorems~\ref{thm:process-outer-wc} and~\ref{thm:equicontinuity}. The continuous mapping argument then gives the quadratic limit in Theorem~\ref{thm:null-dist}.

The dependent-observation extension is stated in Section~\ref{sec:mixing-extension}, with proofs collected in Appendix~\ref{app:mixing-summability}. Under strict stationarity, uniform marginals, and the stated $\alpha$-mixing conditions, Theorem~\ref{thm:mixing-cov-summable} proves absolute cross-covariance summability, Theorem~\ref{thm:mixing-var-limit} identifies the positive semidefinite long-run covariance matrix, and Theorem~\ref{thm:mixing-joint-clt} gives the all-corners finite-dimensional Gaussian limit by the Cram\'er--Wold device.

Appendix~\ref{app:spectral} proves that the covariance operator is compact, positive, self-adjoint, and trace class (Lemma~\ref{lem:kernel-symm} and Theorem~\ref{thm:op-self-adjoint}), with nonnegative eigenvalues summing to $2^{-d}-3^{-d}$ (Theorem~\ref{thm:nonneg-eigen}). Appendix~\ref{app:local-power} records mean-shift identities for local density perturbations but explicitly stops short of a full local-power theorem.

\subsection{Post-hoc localization of EDF discrepancy following rejection of CSR}\label{sec:posthoc}

The statistic $n\bar\omega_d^2$ is a global discrepancy measure. Rejection provides evidence against the fixed-count CSR null but does not identify the form of the departure. A diagnostic decomposition can be obtained by partitioning the EDF evaluation domain into fixed measurable sets $A_1,\dots,A_m\subseteq[0,1]^d$ and defining
\begin{equation}\label{eq:regional-contribution}
  D_j
  =n\,2^{-d}\sum_{c\in\mathrm{Corner}(d)}
    \int_{A_j}\bigl(F_n^c(x)-F^c(x)\bigr)^2\,dx,
  \qquad j=1,\dots,m.
\end{equation}
These quantities are non-negative and satisfy $\sum_{j=1}^m D_j=n\bar\omega_d^2$ whenever the sets form a partition up to Lebesgue-null sets. This is an exact decomposition over threshold locations $x$ at which the corner EDFs are evaluated. It is not an additive decomposition of the observed points by physical subregion: because each $F_n^c(x)$ is cumulative over a corner-oriented orthant, a large $D_j$ need not imply that the generating departure is confined to $A_j$.

For a prespecified partition, complete-null multiplicity can be addressed using joint CSR simulation. For replication $b$, let $D_j^{(b)}$ be the regional contribution and set $M^{(b)}=\max_{1\le k\le m}D_k^{(b)}$. Following the single-step resampling max-$T$ principle of Westfall and Young \cite{WestfallYoung1993}, define
\begin{equation}\label{eq:maxT-adjusted}
  p_j^{\mathrm{adj}}
  =\frac{1+\#\{b:M^{(b)}\ge D_j^{\mathrm{obs}}\}}{B+1}.
\end{equation}
The correction by $B+1$ is the standard finite Monte Carlo correction \cite{PhipsonSmyth2010}. Under the complete CSR null, the observed pattern and the $B$ simulated patterns are exchangeable. Moreover, $\min_jp_j^{\mathrm{adj}}$ is exactly the Monte Carlo $p$-value obtained by comparing the observed maximum $M^{\mathrm{obs}}=\max_jD_j^{\mathrm{obs}}$ with $M^{(1)},\dots,M^{(B)}$. Consequently, rejecting any region when $p_j^{\mathrm{adj}}\le\alpha$ controls the unconditional familywise error rate under the complete null, conservatively in the presence of ties. This is weak, complete-null control; it is not a conditional error guarantee given that the global test has rejected.

Strong familywise-error control when only a subset of regional null hypotheses is true does not follow from complete-null simulation alone. Classical Westfall--Young strong-control arguments invoke subset pivotality, and alternative step-down methods require their own finite- or asymptotic resampling conditions \cite{WestfallYoung1993,RomanoWolf2005,WestfallTroendle2008}. Those conditions have not been established for the cumulative regional statistics in \eqref{eq:regional-contribution}. A step-down max-$T$ variant may be less conservative, but it should not be presented as strongly controlling until the relevant assumptions are verified. If regions have substantially different volumes or null variances, studentized statistics may also be preferable to the raw $D_j$ values, with the studentization included in every simulation.

The partition must be fixed before examining the observed pattern; data-dependent selection would require an additional selective-inference adjustment. Regular grids or prespecified multiscale partitions are natural choices for \eqref{eq:regional-contribution}. Coordinate projections, corner-symmetric contrasts, Karhunen--Lo\`eve scores, and pairwise two-sample comparisons are different diagnostic families rather than instances of the regional decomposition, and each would require its own statistic and joint null calibration.

Individual corner statistics should not be interpreted as independent post-hoc comparisons: they share the same observations, are generally dependent, and are not individually origin-invariant. The proposed regional construction is therefore a prespecified diagnostic with complete-null max-$T$ calibration, not a fully developed localization test. Strong error control, power, partition choice, studentization, and step-down implementation require separate theoretical and simulation study before routine use.

\subsection{Relationship to the literature}

The full $2^d$-corner average extends Zimmerman's origin-invariance principle \cite{Zimmerman1993} to arbitrary dimension, and Theorem~\ref{thm:d2-regression} verifies agreement of the resulting bivariate computing formula with Zimmerman's equation~(3). EDF-based multivariate tests also have a broader lineage in distribution-free goodness-of-fit and independence testing \cite{BlumKieferRosenblatt1961,ChiuLiu2009}. In spatial statistics, CSR is commonly assessed using nearest-neighbor and second-order summaries such as the Clark--Evans index and Ripley's $K$ or $L$ functions \cite{ClarkEvans1954,Ripley1977,Diggle2003,BaddeleyRubakTurner2015}. These methods target different aspects of departure from CSR; Zimmerman explicitly reported that his EDF-based statistic was strongest for heterogeneous alternatives and weaker for regular or aggregated interaction alternatives \cite{Zimmerman1993}.

The process limit is a bounded-VC-class empirical-process CLT in the sense of \cite{vanderVaartWellner1996,GineNickl2016}, specialized to the finite union of corner-oriented orthant classes. The Poisson thinning bridge (Theorems~\ref{thm:poisson-process}--\ref{thm:thinning}) gives the restriction law described in \cite[Example~16.2]{vanLieshout2010}.

The strictly stationary extension in Section~\ref{sec:mixing-extension} follows the classical strong-mixing route: covariance control through $\alpha$-mixing inequalities \cite{Doukhan1994,Bradley2005,Rio2017}, identification of the long-run variance from absolute covariance summability, and the stationary scalar mixing CLT of Ibragimov type \cite{Ibragimov1962,Doukhan1994,Rio2017}, assembled across corner--point coordinates by Cram\'er--Wold.

\subsection{Limitations and future work}

Several mathematical and computational limitations remain for future work:

\begin{enumerate}
\item \textbf{Dependent-process tightness.} Theorems~\ref{thm:mixing-var-limit} and~\ref{thm:mixing-joint-clt} establish the long-run covariance limit and finite-dimensional all-corners Gaussian convergence under strict stationarity and $\alpha$-mixing. Extending Theorem~\ref{thm:equicontinuity} to this dependent setting is needed for a full process-level mixing limit.
\item \textbf{Explicit eigenstructure.} Theorems~\ref{thm:op-self-adjoint} and~\ref{thm:nonneg-eigen} establish the operator properties and trace identity, but this paper does not derive closed-form eigenfunctions or eigenvalues for the full corner--location operator beyond the classical $d=1$ specialization.
\item \textbf{Reproducible numerical study.} Table~\ref{tab:montecarlo} uses $B=10{,}000$ null replications, while the size, power, and comparison tables use the replication counts stated in their captions. A public release of the simulation code, random-number seeds, generated raw summaries, and precise point-process simulation rules is needed for independent reproduction.
\item \textbf{Local alternatives.} Appendix~\ref{app:local-power} proves mean-shift identities only. A genuine local-power result requires triangular-array process convergence and the corresponding noncentral quadratic limit.
\end{enumerate}

\section{Conclusion}\label{sec:conclusion}

We have constructed a general-$d$ origin-invariant Cram\'er--von Mises statistic $\bar\omega^2_d$ and supplied a closed-form $O(n^2)$ formula for simulation-calibrated testing. The statistic reduces to the classical $d=1$ CvM statistic, and its $d=2$ computing formula agrees with Zimmerman's statistic. Under iid CSR, the full corner-indexed empirical process has a Gaussian weak limit with an explicit cross-corner covariance kernel; the statistic consequently converges to the quadratic Gaussian law governed by a positive trace-class covariance operator. The Poisson construction connects fixed-count CSR with the homogeneous Poisson formulation. Under strict stationarity and the stated $\alpha$-mixing assumptions, we additionally obtain the long-run covariance matrix and finite-dimensional all-corners Gaussian convergence. The numerical studies illustrate calibration and sensitivity, particularly to first-order inhomogeneity, but do not establish omnibus superiority over methods designed for interaction-driven clustering or inhibition.

\backmatter

\section*{Acknowledgments}
The author thanks Dr. Welfredo Patungan of the UP School of Statistics for agreeing to serve as an adviser for this paper. The author also acknowledges Yuanhe Zhang, Jason D. Lee, and Fanghui Liu \cite{zhang2026aislt}, whose empirical-process proof architecture for formal statistical learning theory informed portions of the proof strategy used in this work.

\section*{Statements and Declarations}

\begin{itemize}
\item \textbf{Funding:} No funds, grants, or other support were received for conducting this study or preparing this manuscript.
\item \textbf{Competing interests:} The author has no relevant financial or non-financial interests to disclose.
\item \textbf{Data availability:} The Monte Carlo simulation code and raw results are available from the author on request.
\item \textbf{Author contribution:} M.M. performed the conceptualization, methodology, formal analysis, software development, numerical investigation, and preparation and revision of the manuscript.
\item \textbf{Ethics approval:} Not applicable. This study did not involve human participants, animals, or identifiable personal data.
\item \textbf{Consent to participate and publish:} Not applicable.
\item \textbf{Use of generative AI:} Prism and Claude Code were used to assist with LaTeX editing, prose revision, and manuscript diagnostic checking. Lean 4 was used for formal proof verification. The author reviewed all AI-assisted outputs, checked their mathematical content, and accepts full responsibility for the final manuscript.
\end{itemize}

\setcounter{maintablecount}{\value{table}}
\newpage
\begin{appendices}

\setcounter{table}{\value{maintablecount}}
\renewcommand{\thetable}{\arabic{table}}
\renewcommand{\theHtable}{\arabic{table}}
\renewcommand{\theHequation}{\Alph{section}.\arabic{equation}}

\section{Notation summary}\label{sec:notation}

\begin{table}[htbp]
\caption{Notation summary.}\label{tab:notation}
\begin{tabular*}{\textwidth}{@{\extracolsep\fill}ll}
\toprule
Symbol & Meaning \\
\midrule
$d$ & dimension \\
$c$ & corner of $[0,1]^d$ (function $\{0,\dots,d-1\}\to\{\mathrm{true},\mathrm{false}\}$) \\
$c_0$ & base corner (all-false) \\
$\mathrm{cornerLe}(c,x)$ & corner-$c$ lower set of $x$ \\
$F_n^c(x)$ & corner-$c$ empirical CDF \\
$F^c(x)$ & corner-$c$ theoretical CDF under $\mathrm{Uniform}([0,1]^d)$ \\
$\omega^2_c$ & corner-$c$ CvM functional \\
$\bar\omega^2_d$ & $2^d$-corner orbit average (the statistic) \\
$\mathrm{pairFactor}_c(a,b)$ & per-corner pair factor (Definition~\ref{def:factors}) \\
$\mathrm{crossFactor}_c(a)$ & per-corner cross factor (Definition~\ref{def:factors}) \\
$Z_n^c(x)$ & normalized corner-$c$ empirical process at $x$ \\
$I_d$ & corner--location index set $\mathrm{Corner}(d)\times[0,1]^d$ \\
$K_d(j,l)$ & covariance kernel of the limit process \\
$\rho(j,l)$ & $L^2(P_0)$ semimetric on $I_d$ \\
$\mathrm{PPM}(\mu,d)$ & Poisson point process measure on $[0,1]^d$ \\
\botrule
\end{tabular*}
\end{table}

\section{Long-run covariance summability, variance limits, and joint mixing CLT}\label{app:mixing-summability}

This appendix proves Theorems~\ref{thm:mixing-cov-summable}--\ref{thm:mixing-joint-clt}. The notation and strict-stationarity assumptions are those of Section~\ref{sec:mixing-extension}.

\begin{lemma}[Pairwise covariance bound]\label{lem:mixing-cov-bound}
For any corners $c_1,c_2$, points $p_1,p_2\in[0,1]^d$, and distinct indices $i,j$,
\[
  \left|\Cov\!\left(\xi_{c_1,p_1}(Y_i),\xi_{c_2,p_2}(Y_j)\right)\right|
  \le 4\varphi(|i-j|).
\]
\end{lemma}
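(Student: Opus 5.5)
The plan is to apply the classical covariance inequality for strongly mixing sequences with bounded random variables, and then use the rate assumption \eqref{eq:mixing-rate}. Write $U=\xi_{c_1,p_1}(Y_i)$ and $V=\xi_{c_2,p_2}(Y_j)$, and assume without loss of generality that $i<j$; the case $i>j$ follows by swapping roles, since covariance is symmetric. Set $r=j-i\ge1$. Then $U$ is $\sigma(Y_k:k\le i)$-measurable, because it is a measurable function of $Y_i$ alone; measurability of $\xi_{c,p}$ holds since $\mathrm{cornerLe}(c,p)$ is a closed box. Likewise, $V$ is $\sigma(Y_k:k\ge i+r)$-measurable. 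Taking $m=i$ in the supremum defining $\alpha(r)$, the $\alpha$-coefficient between these two $\sigma$-fields is at most $\alpha(r)\le\varphi(r)=\varphi(|i-j|)$.

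The key step is the bounded-variable covariance inequality: if $\|U\|_\infty\le M_1$ and $\|V\|_\infty\le M_2$, with $U$ measurable with respect to $\mathcal A$ and $V$ with respect to $\mathcal B$, then $|\Cov(U,V)|\le 4M_1M_2\,\alpha(\mathcal A,\mathcal B)$. Since both variables are bounded by one, this gives the constant $4$ directly. If I prove the inequality rather than cite it (Ibragimov; Doukhan, Thm.~1.2.3; Rio), I would proceed in two stages. First, for indicators $\mathbf 1_A,\mathbf 1_B$ with $A\in\mathcal A$ and $B\in\mathcal B$, the bound $|P(A\cap B)-P(A)P(B)|\le\alpha$ is just the definition of $\alpha$. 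Second, for general bounded $U$, I would write $\Cov(U,V)=\E[U(\E[V\mid\mathcal A]-\E V)]$ and bound it by $\E|\E[V\mid\mathcal A]-\E V|$. Letting $\eta=\operatorname{sign}(\E[V\mid\mathcal A]-\E V)$, which is $\mathcal A$-measurable, this quantity equals $\Cov(\eta,V)$. Repeating the same trick on the $V$ side with $\zeta=\operatorname{sign}(\E[\eta\mid\mathcal B]-\E\eta)$ reduces everything to $\Cov(\eta,\zeta)$ for $\pm1$-valued variables. Decomposing each $\pm1$ variable into indicators of its level sets then gives at most four terms, each bounded by $\alpha$.

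In the present setting there is a shortcut that gives an even sharper bound. Here $U=\mathbf 1_A-F^{c_1}(p_1)$ and $V=\mathbf 1_B-F^{c_2}(p_2)$, with $A=\{Y_i\in\mathrm{cornerLe}(c_1,p_1)\}$ and $B=\{Y_j\in\mathrm{cornerLe}(c_2,p_2)\}$. Constants do not change covariance, so $\Cov(U,V)=P(A\cap B)-P(A)P(B)$, and this is at most $\alpha(r)$ in absolute value. That already yields $|\Cov|\le\varphi(|i-j|)\le4\varphi(|i-j|)$, with no need for the general inequality. I would present this indicator argument as the main proof and mention the general inequality only as the source of the stated constant.

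The only point needing care, which I expect to be the main obstacle, is the measurability bookkeeping. I need $U$ and $V$ to live in a past $\sigma$-field and a future $\sigma$-field separated by exactly $r$ steps, with the index $m$ in the definition of $\alpha(r)$ chosen appropriately, namely $m=\min(i,j)$. I also need to confirm that $\alpha(r)$ is defined for $r\ge1$, which holds because $i\ne j$. Strict stationarity is not needed for this lemma itself, because the supremum over $m$ already covers every position.
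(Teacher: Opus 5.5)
Your proof is correct, and your main argument is more elementary than the paper's. The paper sets up the same past and future $\sigma$-fields $\sigma(Y_k:k\le i)$ and $\sigma(Y_k:k\ge j)$. It then simply cites the general bounded-variable inequality $|\Cov(X,Z)|\le 4C_1C_2\,\alpha(j-i)$ with $C_1=C_2=1$ and applies \eqref{eq:mixing-rate}, without using the indicator structure of $\xi_{c,p}$.

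Your shortcut notes that removing the constants $F^{c_1}(p_1)$ and $F^{c_2}(p_2)$ leaves exactly $P(A\cap B)-P(A)P(B)$, with $A\in\sigma(Y_i)$ and $B\in\sigma(Y_j)$. The definition of the mixing coefficient then gives $|\Cov|\le\alpha(|i-j|)\le\varphi(|i-j|)$ directly. This needs no external lemma and gives constant $1$ instead of $4$.

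The paper's citation has one advantage: it applies verbatim to any bounded functionals, which is how it is reused for the projections $X_i(t)$ in the proof of Theorem~\ref{thm:mixing-cov-summable}. Your bound transfers there too, by bilinearity, giving $|\Cov(X_i(t),X_j(t))|\le\|t\|_1^2\varphi(|i-j|)$.

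Your route relies on one fact: $\alpha(\mathcal A,\mathcal B)$ dominates $|P(A\cap B)-P(A)P(B)|$ for all $A\in\mathcal A$, $B\in\mathcal B$. This holds under Rosenblatt's definition and under variants carrying an extra factor $2$. The paper never writes the definition out, so you should state it explicitly.

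Your sketch of the general inequality via the sign trick is the standard Ibragimov argument and is fine. Take $\mathrm{sign}(0)=1$ so that $\eta$ and $\zeta$ are genuinely two-valued.
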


\begin{proof}
Assume without loss of generality that $i<j$. The first variable is measurable with respect to $\sigma(Y_k:k\le i)$ and the second with respect to $\sigma(Y_k:k\ge j)$; both are bounded in absolute value by one. The standard strong-mixing covariance inequality for variables bounded by constants $C_1$ and $C_2$ gives \cite{Doukhan1994,Bradley2005,Rio2017}
\[
  |\Cov(X,Z)|\le4C_1C_2\alpha(j-i).
\]
Taking $C_1=C_2=1$ and applying \eqref{eq:mixing-rate} with lag $j-i$ proves the result.
\end{proof}

\begin{lemma}[Shifted rate summability]\label{lem:mixing-rate-summable}
Assumption~\eqref{eq:weighted-mixing-summability} implies
\[
  \sum_{r=0}^{\infty}\varphi(r+1)<\infty.
\]
\end{lemma}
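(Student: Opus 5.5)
The plan is to compare the shifted series term by term with the weighted series in \eqref{eq:weighted-mixing-summability}. The key inequality is elementary: for every integer $r\ge 0$ we have $r+1\ge 1$, and $\varphi\ge 0$, so
\[
  \varphi(r+1)\;\le\;(r+1)\,\varphi(r+1).
\]
Summing over $r\ge 0$ and reindexing with $s=r+1$ gives
\[
  \sum_{r=0}^{\infty}\varphi(r+1)\;\le\;\sum_{s=1}^{\infty}s\,\varphi(s)\;=\;\sum_{s=0}^{\infty}s\,\varphi(s),
\]
where the last equality holds because the $s=0$ term of the weighted series vanishes. The right-hand side is finite by \eqref{eq:weighted-mixing-summability}.

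Because all terms are nonnegative, the partial sums of $\sum_r\varphi(r+1)$ are monotone and bounded by this finite constant. Hence the series converges, with no rearrangement or absolute-convergence subtleties to address. Formally, I would state the partial-sum bound $\sum_{r=0}^{R}\varphi(r+1)\le\sum_{s=0}^{R+1}s\,\varphi(s)\le\sum_{s\ge0}s\,\varphi(s)$ and let $R\to\infty$.

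The only point needing care is the reindexing. The value $\varphi(0)$ never enters the shifted sum, so no assumption on $\varphi(0)$ is needed; this matters because the mixing coefficient itself is defined only for $r\ge1$. With that noted, I expect no real obstacle; the lemma is a direct comparison test. Its role is to feed into the absolute summability in Theorem~\ref{thm:mixing-cov-summable} via Lemma~\ref{lem:mixing-cov-bound}, since lags $|r|\ge1$ contribute at most $4\varphi(|r|)$.
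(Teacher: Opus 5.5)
Your proof is correct and is the paper's argument: the termwise bound $\varphi(r+1)\le(r+1)\varphi(r+1)$ from nonnegativity, then comparison with the tail of the convergent weighted series in \eqref{eq:weighted-mixing-summability}. The explicit partial-sum bound and your remark that $\varphi(0)$ never enters the shifted sum add detail without changing the approach.
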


\begin{proof}
For every $r\ge0$, nonnegativity gives
\[
  \varphi(r+1)\le(r+1)\varphi(r+1).
\]
The series formed by the right-hand side is the tail of the convergent series in \eqref{eq:weighted-mixing-summability}; comparison completes the proof.
\end{proof}

\begin{proof}[Proof of Theorem~\ref{thm:mixing-cov-summable}]
For positive lags, Lemma~\ref{lem:mixing-cov-bound} bounds the lag-$r$ covariance by $4\varphi(r)$, and Lemma~\ref{lem:mixing-rate-summable} gives summability. Negative lags follow by strict stationarity and interchange of the two variables. Hence every entry of the all-corners covariance array is absolutely summable. Moreover, $|X_i(t)|\le\|t\|_1$, so the same covariance inequality gives
\[
  |\Cov(X_i(t),X_j(t))|
  \le4\|t\|_1^2\varphi(|i-j|),
\]
and its covariance series is absolutely summable.
\end{proof}

\begin{proof}[Proof of Theorem~\ref{thm:mixing-var-limit}]
Strict stationarity makes each covariance depend only on the lag, so
\[
  \frac{1}{n}\Var\!\left(\sum_{i=0}^{n-1}Z_i(x)\right)
  =\gamma_x(0)+\sum_{r=1}^{n-1}\left(1-\frac{r}{n}\right)
    \bigl(\gamma_x(r)+\gamma_x(-r)\bigr).
\]
Theorem~\ref{thm:mixing-cov-summable} makes the two-sided covariance series absolutely summable. Dominated convergence therefore yields the limit $\sum_{r\in\Z}\gamma_x(r)=x^TSx$. Since every normalized finite-sample variance is nonnegative, its limit satisfies $x^TSx\ge0$ for every $x\in\R^m$, and hence $S$ is positive semidefinite.
\end{proof}

\begin{proof}[Proof of Theorem~\ref{thm:mixing-joint-clt}]
By Theorem~\ref{thm:mixing-var-limit}, every one-dimensional linear combination has limiting normalized variance $t^TSt$. Each projection is bounded and therefore has a finite $(2+\delta)$th moment. Under \eqref{eq:mixing-clt-summability}, the stationary strong-mixing CLT gives convergence in direction $t$ to $\mathcal N(0,t^TSt)$ \cite{Ibragimov1962,Doukhan1994,Rio2017}. The Cram\'er--Wold device then assembles these directional limits into the joint $m$-dimensional Gaussian limit $\mathcal N(0,S)$.
\end{proof}

\section{Spectral analysis of the limiting covariance operator}\label{app:spectral}

This appendix records the operator properties needed for Theorem~\ref{thm:null-dist}. Put $I_d=\mathrm{Corner}(d)\times[0,1]^d$ and
\[
  \nu_d=2^{-d}\sum_{c\in\mathrm{Corner}(d)}\delta_c\otimes dx.
\]
For $j=(c,p)$, write $A_j=\mathrm{cornerLe}(c,p)$ and $m(j)=P_0(A_j)=F^c(p)$. The covariance kernel is
\[
  K_d(j,l)=P_0(A_j\cap A_l)-m(j)m(l).
\]
If $l=(c',q)$, the intersection probability has the coordinate product form
\[
  P_0(A_j\cap A_l)=\prod_{k=1}^d H_k(c,c';p,q),
\]
where
\[
  H_k(c,c';p,q)=
  \begin{cases}
    \min(p_k,q_k), & c_k=c'_k=0,\\
    1-\max(p_k,q_k), & c_k=c'_k=1,\\
    (p_k-q_k)_+, & c_k=0,\ c'_k=1,\\
    (q_k-p_k)_+, & c_k=1,\ c'_k=0.
  \end{cases}
\]

\begin{lemma}[Covariance-kernel properties]\label{lem:kernel-symm}
The kernel $K_d$ is measurable, symmetric, bounded, and positive semidefinite: for every finite set $j_1,\dots,j_m\in I_d$ and $a_1,\dots,a_m\in\R$,
\[
  \sum_{r,s=1}^m a_ra_sK_d(j_r,j_s)\ge0.
\]
\end{lemma}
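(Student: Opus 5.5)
The plan is to realise $K_d$ as the covariance of a family of indicator random variables, so that symmetry, boundedness and positive semidefiniteness all reduce to elementary facts about variances. Let $X\sim P_0$ and put $f_j(X)=\mathbf 1\{X\in A_j\}$. Then
\[
  K_d(j,l)=\E[f_j(X)f_l(X)]-\E f_j(X)\,\E f_l(X)=\Cov\bigl(f_j(X),f_l(X)\bigr).
\]

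\emph{Symmetry} follows from $A_j\cap A_l=A_l\cap A_j$, or equivalently from the symmetry of covariance.

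\emph{Boundedness} follows because both terms in the definition of $K_d$ lie in $[0,1]$. Hence $|K_d|\le 1$. The covariance form gives the sharper bound $|K_d|\le 1/4$ by Cauchy--Schwarz, since each indicator has variance $m(1-m)\le 1/4$.

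\emph{Positive semidefiniteness} comes from bilinearity of covariance. For points $j_1,\dots,j_m$ and coefficients $a_1,\dots,a_m$,
\[
  \sum_{r,s}a_ra_sK_d(j_r,j_s)=\Var\Bigl(\sum_r a_r f_{j_r}(X)\Bigr)\ge 0.
\]
This is the same computation already used in the proof of Theorem~\ref{thm:fidi-clt}.

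\emph{Measurability} is the step that needs the most care, though it is routine. Here $I_d$ carries the product $\sigma$-algebra of the discrete $\sigma$-algebra on $\mathrm{Corner}(d)$ and the Borel $\sigma$-algebra on $[0,1]^d$. Since $\mathrm{Corner}(d)$ is finite, it suffices to fix the pair of corners $(c,c')$ and show that $(p,q)\mapsto K_d((c,p),(c',q))$ is Borel on $[0,1]^d\times[0,1]^d$.

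For fixed corners, use the coordinate product formula for $P_0(A_j\cap A_l)$ via the factors $H_k$. Each branch of $H_k$ is one of $\min(p_k,q_k)$, $1-\max(p_k,q_k)$, or $(\pm(p_k-q_k))_+$, so each is continuous in $(p,q)$. The mean $m(j)=F^c(p)$ is a product of terms of the form $p_k$ or $1-p_k$, which is also continuous. Therefore, for each fixed corner pair, $K_d$ is a continuous function of $(p,q)$.

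A function on $I_d\times I_d$ whose restriction to every one of the finitely many corner-pair slices is continuous is jointly measurable. This completes measurability.

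The only real obstacle is checking the four-branch formula for $P_0(A_j\cap A_l)$, which I would take as given from the statement of the appendix. Per coordinate, the intersection of two half-lines is a half-line or an interval, and Lebesgue measure of such a set is the stated min, max, or positive-part expression. With that formula in hand, the remaining steps are immediate.
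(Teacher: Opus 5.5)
Your proposal is correct and follows essentially the same route as the paper: symmetry from $A_j\cap A_l=A_l\cap A_j$, measurability and boundedness from the coordinate intersection formula, and positive semidefiniteness by writing the quadratic form as $\Var_{P_0}$ of a linear combination of indicators. Your slice-by-slice continuity argument for measurability and the sharper bound $|K_d|\le 1/4$ just make explicit details that the paper leaves implicit.
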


\begin{proof}
The intersection formula gives measurability and boundedness, while symmetry follows from $A_j\cap A_l=A_l\cap A_j$. For positive semidefiniteness,
\[
  \sum_{r,s}a_ra_sK_d(j_r,j_s)
  =\Var_{P_0}\!\left(\sum_{r=1}^m a_r
    \bigl(\mathbf1_{A_{j_r}}-P_0(A_{j_r})\bigr)\right)\ge0.
\]
\end{proof}

Define $T_d:L^2(I_d,\nu_d)\to L^2(I_d,\nu_d)$ by
\[
  (T_df)(j)=\int_{I_d}K_d(j,l)f(l)\,d\nu_d(l).
\]

\begin{theorem}[Covariance-operator properties]\label{thm:op-self-adjoint}
The operator $T_d$ is bounded, compact, self-adjoint, positive, and trace class.
\end{theorem}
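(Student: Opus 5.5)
We would prove Theorem~\ref{thm:op-self-adjoint} by writing $T_d$ as the covariance operator of an explicit $L^2$-valued random element. Since $\nu_d$ is a probability measure and Lemma~\ref{lem:kernel-symm} gives $|K_d|\le1$ with $K_d$ measurable, we have $K_d\in L^2(\nu_d\otimes\nu_d)$. Hence $T_d$ is a Hilbert--Schmidt integral operator with $\|T_d\|\le\|K_d\|_{L^2(\nu_d\otimes\nu_d)}\le1$. Boundedness and compactness follow at once, because Hilbert--Schmidt operators are compact. Self-adjointness follows from symmetry $K_d(j,l)=K_d(l,j)$ and Fubini, which apply since the integrand $K_d(j,l)f(l)g(j)$ is integrable for $f,g\in L^2$ by Cauchy--Schwarz. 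This yields $\langle T_df,g\rangle=\langle f,T_dg\rangle$.

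For positivity we avoid passing from finite sums to integrals and use a representation. Let $X\sim P_0$ and set $\Phi(j)=\mathbf 1_{A_j}(X)-m(j)$. This is a bounded, jointly measurable function of $(X,j)$: on each corner fiber it is a product of coordinate indicators in $(X,p)$. So $\Phi$ is a random element of $\mathcal H_d=L^2(I_d,\nu_d)$ with $\E\|\Phi\|^2\le1$ and $K_d(j,l)=\E[\Phi(j)\Phi(l)]$. By Fubini, for $f\in\mathcal H_d$,
\[
\langle T_df,f\rangle=\int\!\!\int \E[\Phi(j)\Phi(l)]f(j)f(l)\,d\nu_d(j)\,d\nu_d(l)=\E\bigl[\langle\Phi,f\rangle^2\bigr]\ge0.
\]
The interchange is justified by $\E\int\!\int|\Phi(j)\Phi(l)f(j)f(l)|\le\|f\|_{L^1(\nu_d)}^2\le\|f\|^2$.

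The trace-class property is the step needing the most care. Because $T_d$ is compact, self-adjoint and positive, the spectral theorem gives an orthonormal eigenbasis $(e_k)$ of $\overline{\mathrm{ran}\,T_d}$ with eigenvalues $\lambda_k\ge0$. By the representation above,
\[
\sum_k\lambda_k=\sum_k\E\langle\Phi,e_k\rangle^2=\E\sum_k\langle\Phi,e_k\rangle^2\le\E\|\Phi\|^2<\infty,
\]
using monotone convergence and Bessel's inequality. A positive compact operator with summable eigenvalues is trace class. In fact $\Phi\in\overline{\mathrm{ran}\,T_d}$ almost surely: any $g\perp\mathrm{ran}\,T_d$ satisfies $\E\langle\Phi,g\rangle^2=\langle T_dg,g\rangle=0$, so $\langle\Phi,g\rangle=0$ a.s. Applying this to a countable orthonormal basis of $(\overline{\mathrm{ran}\,T_d})^\perp$, which exists since $\mathcal H_d$ is separable, shows that Bessel's inequality is an equality. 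The trace is therefore
\[
\operatorname{tr}T_d=\E\|\Phi\|^2=\int K_d(j,j)\,d\nu_d(j)=2^{-d}\sum_c\int_{[0,1]^d}F^c(1-F^c)\,dx=2^{-d}-3^{-d},
\]
as computed in Section~\ref{sec:montecarlo}. This identity anticipates Theorem~\ref{thm:nonneg-eigen}.

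The main obstacle is measure-theoretic rather than analytic. We must check joint measurability of $(X,j)\mapsto\Phi(j)$ so that Fubini applies and $\Phi$ is a genuine $\mathcal H_d$-valued random variable. Since $\mathrm{Corner}(d)$ is finite with the discrete $\sigma$-algebra, this reduces to measurability of the sets $\{(y,p): y\in\mathrm{cornerLe}(c,p)\}$ in $[0,1]^d\times[0,1]^d$. These are finite intersections of half-spaces of the form $\{y_k\le p_k\}$ or $\{y_k\ge p_k\}$, which are closed sets and hence Borel.
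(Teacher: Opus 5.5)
Your proof is correct and follows essentially the same route as the paper. You use Hilbert--Schmidt for boundedness and compactness, symmetry plus Fubini for self-adjointness, and the centered feature map $X\mapsto(\mathbf 1_{A_j}(X)-m(j))_{j\in I_d}$ for positivity and trace class. The only difference is that you unpack the paper's one-line claim that a square-integrable feature map has a trace-class covariance operator, via the spectral theorem, Bessel's inequality, and $\Phi\in\overline{\mathrm{ran}\,T_d}$ almost surely, which also gives the trace identity $\operatorname{tr}T_d=\E\|\Phi\|^2=2^{-d}-3^{-d}$ that the paper states separately in Theorem~\ref{thm:nonneg-eigen}.
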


\begin{proof}
Since $K_d$ is bounded and $\nu_d$ is finite, $K_d\in L^2(\nu_d\otimes\nu_d)$; hence $T_d$ is Hilbert--Schmidt and therefore bounded and compact. Symmetry of $K_d$ and Fubini's theorem give self-adjointness. For $f\in L^2(I_d,\nu_d)$,
\[
  \langle T_df,f\rangle
  =\Var_{P_0}\!\left(
    \int_{I_d}f(j)\bigl(\mathbf1_{A_j}(X)-m(j)\bigr)
    \,d\nu_d(j)
  \right)\ge0.
\]
Thus $T_d$ is positive. Finally, the feature map
\[
  X\longmapsto
  \bigl(j\mapsto\mathbf1_{A_j}(X)-m(j)\bigr)
\]
has finite mean squared $L^2(\nu_d)$ norm, so its covariance operator is trace class.
\end{proof}

\begin{theorem}[Trace identity and eigenvalues]\label{thm:nonneg-eigen}
Let $(\lambda_k)_{k\ge1}$ be the nonzero eigenvalues of $T_d$, repeated according to multiplicity. Then
\[
  \lambda_k\ge0,
  \qquad
  \sum_{k=1}^\infty\lambda_k
  =\operatorname{tr}(T_d)
  =\int_{I_d}K_d(j,j)\,d\nu_d(j)
  =2^{-d}-3^{-d}.
\]
No tensor-product closed form for the eigenvalues is asserted for general $d$.
\end{theorem}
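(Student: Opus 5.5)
Nonnegativity of the eigenvalues follows at once from Theorem~\ref{thm:op-self-adjoint}. Since $T_d$ is positive and self-adjoint, any eigenpair $T_df=\lambda f$ with $f\neq0$ gives
\[
\lambda\|f\|^2=\langle T_df,f\rangle\ge0 .
\]
For the trace I will use that $T_d$ is a positive trace-class operator. By the spectral theorem for compact self-adjoint operators, its trace equals the sum of its eigenvalues counted with multiplicity. The zero eigenvalues contribute nothing, so $\operatorname{tr}(T_d)=\sum_k\lambda_k$.

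The main step is the diagonal-integral identity $\operatorname{tr}(T_d)=\int K_d(j,j)\,d\nu_d(j)$. Mercer's theorem is awkward here: $K_d$ is not continuous on $I_d$, because the indicators jump. So I will instead use the feature-map representation already introduced in the proof of Theorem~\ref{thm:op-self-adjoint}. Set $\Phi(X)=\bigl(j\mapsto\mathbf 1_{A_j}(X)-m(j)\bigr)\in\mathcal H_d$. This map is jointly measurable in $(X,j)$, and
\[
\|\Phi(X)\|^2_{\mathcal H_d}\le1 .
\]
By Fubini, $T_d$ is the covariance operator $\E[\Phi(X)\otimes\Phi(X)]$, since
\[
\langle T_df,g\rangle=\E\bigl[\langle\Phi,f\rangle\langle\Phi,g\rangle\bigr].
\]
Take an orthonormal basis $(e_k)$ of $\mathcal H_d$. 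Monotone convergence and Parseval give
\[
\operatorname{tr}(T_d)=\sum_k\E\langle\Phi,e_k\rangle^2=\E\|\Phi(X)\|^2=\E\int_{I_d}\bigl(\mathbf 1_{A_j}(X)-m(j)\bigr)^2\,d\nu_d(j).
\]
A final application of Fubini turns this into $\int_{I_d}K_d(j,j)\,d\nu_d(j)$. I expect the joint measurability and Fubini justifications to be the only delicate part. The integrand is bounded by $1$ and the measures are finite, so these should be routine.

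Finally I will compute the integral. On the diagonal, $K_d(j,j)=m(j)-m(j)^2=F^c(p)\bigl(1-F^c(p)\bigr)$. For each corner, the reflection $t\mapsto1-t$ in the flipped coordinates preserves Lebesgue measure, so
\[
\int_{[0,1]^d}F^c=\prod_k\int_0^1t\,dt=2^{-d},\qquad \int_{[0,1]^d}(F^c)^2=\prod_k\int_0^1t^2\,dt=3^{-d}.
\]
Averaging over the $2^d$ corners with weights $2^{-d}$ then gives $2^{-d}-3^{-d}$. This agrees with the exact null mean computed in Section~\ref{sec:montecarlo}.
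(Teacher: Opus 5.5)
Your proof is correct and follows the same route as the paper: positivity of $T_d$ gives $\lambda_k\ge0$, the spectral trace identity gives $\sum_k\lambda_k=\operatorname{tr}(T_d)$, and the diagonal integral is evaluated corner by corner using $\int F^c=2^{-d}$ and $\int(F^c)^2=3^{-d}$. You go further than the paper only on the identity $\operatorname{tr}(T_d)=\int_{I_d}K_d(j,j)\,d\nu_d(j)$, which the paper's proof leaves implicit apart from the feature-map remark in Theorem~\ref{thm:op-self-adjoint}; your Parseval/Tonelli argument via $\Phi(X)$ correctly supplies this step without needing Mercer's theorem for the discontinuous kernel.
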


\begin{proof}
Compactness and self-adjointness give a discrete real spectrum, and positivity gives $\lambda_k\ge0$. Since $K_d(j,j)=m(j)(1-m(j))$,
\[
  \int_{I_d}K_d(j,j)\,d\nu_d(j)
  =2^{-d}\sum_c\left(\int F^c(p)\,dp-\int(F^c(p))^2\,dp\right).
\]
For every corner, the two integrals equal $2^{-d}$ and $3^{-d}$, respectively. Averaging over the $2^d$ corners yields $2^{-d}-3^{-d}$. The equality with the sum of the eigenvalues is the spectral trace identity for a positive trace-class operator.
\end{proof}

\section{Mean shifts under local density perturbations}\label{app:local-power}

This appendix records the deterministic mean-shift identities associated with local density perturbations. A full local-power theorem would additionally require triangular-array process convergence under the alternatives, which is not claimed here.

\begin{definition}[Contiguous local density perturbation]\label{def:local-density}
Let $h:[0,1]^d\to\R$ be bounded and measurable, with
\[
  \int_{[0,1]^d}h(y)\,dy=0.
\]
For all sufficiently large $n$, define the probability density
\[
  f_n(y) = 1 + \frac{1}{\sqrt{n}} h(y).
\]
Boundedness ensures $f_n\ge0$ for all sufficiently large $n$, while the zero-integral condition gives $\int f_n=1$.
\end{definition}

\begin{definition}[Corner mean shift function]\label{def:mean-shift}
For a corner $c\in\mathrm{Corner}(d)$ and $p\in[0,1]^d$, define
\[
  \mu_h^c(p)=\int_{[0,1]^d}\xi_{c,p}(y)h(y)\,dy,
\]
where $\xi_{c,p}(y) = \mathbf 1\{y \in \mathrm{cornerLe}(c,p)\} - F^c(p)$ is the centered corner indicator.
\end{definition}

\begin{theorem}[Expectation under local alternatives]\label{thm:local-alternative-expectation}
If $P_n$ has density $f_n$ with respect to Lebesgue measure, then
\[
  \E_{P_n}[\xi_{c,p}(Y)]
  =\frac{1}{\sqrt n}\mu_h^c(p).
\]
\end{theorem}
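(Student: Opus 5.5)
The plan is to compute the expectation directly as an integral against the density $f_n$. Since $\xi_{c,p}$ is bounded by one in absolute value and measurable, and $f_n$ is a bounded probability density for all sufficiently large $n$ (Definition~\ref{def:local-density}), the expectation exists. It can be written as
\[
  \E_{P_n}[\xi_{c,p}(Y)]=\int_{[0,1]^d}\xi_{c,p}(y)\,f_n(y)\,dy
  =\int_{[0,1]^d}\xi_{c,p}(y)\,dy+\frac{1}{\sqrt n}\int_{[0,1]^d}\xi_{c,p}(y)h(y)\,dy,
\]
using linearity of the integral. Splitting the integral in this way is legitimate because both $\xi_{c,p}$ and $\xi_{c,p}h$ are bounded on the finite-measure cube, so each piece is integrable.

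The second term is exactly $n^{-1/2}\mu_h^c(p)$ by Definition~\ref{def:mean-shift}. It therefore remains to show that the first term vanishes, i.e.\ that $\xi_{c,p}$ is centered under $P_0$. For this I will use
\[
  \int\mathbf 1\{y\in\mathrm{cornerLe}(c,p)\}\,dy=P_0(\mathrm{cornerLe}(c,p))=F^c(p).
\]
This follows coordinatewise by Fubini. The set $\mathrm{cornerLe}(c,p)$ is a product of intervals $[0,p_k]$ or $[p_k,1]$, whose lengths are $p_k$ or $1-p_k$ according to $c(k)$. The product of these lengths is precisely $F^c(p)$, which is also the fact already used in Section~\ref{sec:fidi-clt}. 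Subtracting the constant $F^c(p)$ integrated over the unit-volume cube then gives zero.

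There is no genuine obstacle here. The only point requiring care is to state that the identity holds for those $n$ large enough that $f_n\ge0$, so that $P_n$ is a genuine probability measure. It is also worth remarking that the zero-mean condition $\int h=0$ is not needed for the identity itself, only for $f_n$ to be a density: indeed $\int F^c(p)h(y)\,dy=0$ shows that $\mu_h^c(p)=\int_{\mathrm{cornerLe}(c,p)}h$.
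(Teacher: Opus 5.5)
Your proof is correct and is the same argument as the paper's: expand $f_n=1+n^{-1/2}h$, note that $\xi_{c,p}$ integrates to zero under the uniform law, and identify the remaining term as $n^{-1/2}\mu_h^c(p)$. One small point on your closing aside: the theorem's identity indeed does not need $\int h=0$, but the simplification $\mu_h^c(p)=\int_{\mathrm{cornerLe}(c,p)}h$ is exactly where $\int h=0$ is used, so it does not support the preceding claim as your ``indeed'' suggests.
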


\begin{proof}
Expand $f_n=1+n^{-1/2}h$. The integral of $\xi_{c,p}$ under the uniform law is zero, leaving $n^{-1/2}\int\xi_{c,p}h=n^{-1/2}\mu_h^c(p)$.
\end{proof}

\begin{theorem}[Directional asymptotic mean shift vector]\label{thm:local-alternative-mean-shift-vector}
Let $\mathcal I$ be a finite index set, $(c_j)_{j\in\mathcal I}$ corners, $(p_j)_{j\in\mathcal I}$ evaluation points, $t \in \R^\mathcal I$ a direction vector, and $\mu_h = (\mu_h^{c_j}(p_j))_{j\in\mathcal I}$ the mean shift vector. Then
\[
  \E_{P_n}\!\left[\sum_{j\in\mathcal I}t_j\xi_{c_j,p_j}(Y)\right]
  =\frac{1}{\sqrt n}(t\cdot\mu_h).
\]
\end{theorem}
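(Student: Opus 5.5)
The plan is to reduce the statement to Theorem~\ref{thm:local-alternative-expectation} applied term by term, using linearity of expectation. Write $W(Y)=\sum_{j\in\mathcal I}t_j\xi_{c_j,p_j}(Y)$. This is a finite linear combination of bounded measurable functions, since each $|\xi_{c_j,p_j}|\le1$. Because $f_n\ge0$ is a genuine probability density for all sufficiently large $n$, the variable $W(Y)$ is integrable under $P_n$ with $|W|\le\|t\|_1$. Linearity of the $P_n$-expectation then gives
\[
  \E_{P_n}[W(Y)]=\sum_{j\in\mathcal I}t_j\,\E_{P_n}\bigl[\xi_{c_j,p_j}(Y)\bigr].
\]

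Next I substitute Theorem~\ref{thm:local-alternative-expectation} for each index $j$. That theorem holds for an arbitrary corner–point pair, so it applies to each $(c_j,p_j)$ and gives
\[
  \E_{P_n}[\xi_{c_j,p_j}(Y)]=n^{-1/2}\mu_h^{c_j}(p_j).
\]
Pulling the common factor $n^{-1/2}$ out of the finite sum leaves $n^{-1/2}\sum_j t_j\mu_h^{c_j}(p_j)$. By the definition of the mean shift vector $\mu_h$, this sum is exactly $t\cdot\mu_h$.

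An equivalent route avoids invoking the scalar theorem. One can write
\[
  \E_{P_n}[W]=\int W\,(1+n^{-1/2}h)\,dy.
\]
The term $\int W\,dy$ vanishes because each $\xi_{c_j,p_j}$ is centered under $P_0$, since $P_0(\mathrm{cornerLe}(c,p))=F^c(p)$. The remaining term is $n^{-1/2}\int Wh\,dy=n^{-1/2}\,t\cdot\mu_h$, using Fubini-free linearity of the Lebesgue integral over the finite sum.

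There is no real obstacle here. The only point needing care is the standing proviso that $n$ is large enough for $f_n$ to be a density, so that $P_n$ is defined. Integrability of every term is automatic from the boundedness of $h$ and of the indicators.
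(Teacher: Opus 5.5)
Your proof is correct and matches the paper's: linearity of expectation over the finite index set, then termwise application of Theorem~\ref{thm:local-alternative-expectation}, then factoring out $n^{-1/2}$ to get $t\cdot\mu_h$. Your alternative direct-integration route is also valid; it simply re-derives the scalar theorem inline.
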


\begin{proof}
By linearity of finite sums and Theorem~\ref{thm:local-alternative-expectation}, the sum of expectations equals $n^{-1/2} \sum_{j\in\mathcal I} t_j \mu_h^{c_j}(p_j) = n^{-1/2} (t \cdot \mu_h)$.
\end{proof}

\begin{proposition}[Nonzero standardized directional shift]\label{thm:local-power-noncentrality}
If $t\cdot\mu_h\ne0$ and the null asymptotic variance $\sigma_t^2$ is positive, then
\[
  \delta^2 = \left( \frac{t \cdot \mu_h}{\sigma_t} \right)^2 > 0
\]
is strictly positive. This quantity identifies a shifted direction but, by itself, does not establish the local power of the quadratic statistic.
\end{proposition}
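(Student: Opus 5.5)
The statement is essentially immediate: positivity of a square of a nonzero real number. The plan is to first make the ingredients precise. Here $\sigma_t^2$ is the null asymptotic variance of the Cram\'er--Wold projection $\sum_{j\in\mathcal I}t_j Z_n^{c_j}(p_j)$. By Theorem~\ref{thm:fidi-clt}, this variance equals $t^\top S t$, where $S$ is the cross-corner covariance matrix $S_{j,l}=K_d((c_j,p_j),(c_l,p_l))$. Lemma~\ref{lem:kernel-symm} guarantees $t^\top S t\ge0$, and the hypothesis upgrades this to $\sigma_t^2>0$. Consequently $\sigma_t=\sqrt{t^\top S t}$ is a well-defined strictly positive real number, so the quotient $(t\cdot\mu_h)/\sigma_t$ is well defined. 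This is the only point that needs checking, namely that no division by zero occurs.

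Next, note that $t\cdot\mu_h$ is a finite real number. Each $\mu_h^{c_j}(p_j)$ is finite because $|\xi_{c,p}|\le1$ and $h$ is bounded on the finite-measure cube $[0,1]^d$ (Definition~\ref{def:mean-shift}). By hypothesis $t\cdot\mu_h\neq0$. The quotient of a nonzero real by a positive real is a nonzero real, and the square of a nonzero real is strictly positive. Hence $\delta^2>0$.

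For interpretation, I would add that by Theorem~\ref{thm:local-alternative-mean-shift-vector} the numerator is exactly $\sqrt n$ times the mean of the projected summand under $P_n$. Thus $\delta$ is the standardized drift of the projected process. The second sentence of the statement is a disclaimer rather than a mathematical claim, so it needs no proof beyond remarking that converting $\delta^2$ into a local-power statement would require the triangular-array convergence listed among the limitations. There is no genuine obstacle; the only step deserving care is confirming that $\sigma_t$ is the positive root of $t^\top S t$, so that the hypothesis $\sigma_t^2>0$ makes the expression legitimate.
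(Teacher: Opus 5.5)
Your argument is correct and is the same as the paper's one-line proof: a nonzero real divided by a positive $\sigma_t$ is nonzero, so its square is positive. Your extra checks are sound elaborations of well-definedness that the paper leaves implicit, namely identifying $\sigma_t^2=t^\top S t$ via Theorem~\ref{thm:fidi-clt} and noting that $\mu_h$ is finite.
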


\begin{proof}
The quotient is nonzero and its square is therefore positive.
\end{proof}

\begin{remark}
To obtain a noncentral quadratic Gaussian limit for $n\bar\omega_d^2$ under $(P_n)$, one must establish joint or process-level convergence of the triangular-array empirical process to $G+\mu_h$ and then apply the continuous mapping theorem. Contiguity or a suitable Le Cam argument may provide such a route \cite{LeCam1986}; those steps are outside the results proved here.
\end{remark}

\end{appendices}

\bibliographystyle{plainnat}
\bibliography{sn-bibliography}

\end{document}